\title{Optimal Sequential Contests\thanks{
I am grateful to
Debraj Ray, four anonymous referees,
Stefano Barbieri,
Jeff Ely,
Andrea Gallice,
Dino Gerardi,
Marit Hinnosaar,
Johannes H\"orner,
Martin Jensen,
Kai Konrad,
Dan Kovenock,
Ignacio Monz\'{o}n,
Marco Ottaviani,
Alessandro Pavan,
Alex Possajennikov,
Marco Serena,
Ron Siegel,
Andy Skrzypacz,
and
Jean Tirole
as well as audience members at seminars at conferences
for their suggestions.
}}
\date{
  \monyear\today
  \ifx\shortversion\undefined
    \thanks{The latest version of the paper is available at
    \href{https://toomas.hinnosaar.net/contests.pdf}{\url{toomas.hinnosaar.net/contests.pdf}}. A previous version of this paper was circulated under the title ``Dynamic common-value contests''.}\\
      {\small First version: December 2016}
  \else
  \fi
}
\author{
Toomas Hinnosaar%
\thanks{
University of Nottingham and CEPR, \href{mailto:toomas@hinnosaar.net}{\url{toomas@hinnosaar.net}}. %
}%
}
\declaretheorem[name=Proposition]{proposition}
\declaretheorem[name=Theorem]{theorem}
\declaretheorem[name=Corollary]{corollary}
\declaretheorem[name=Lemma]{lemma}
\declaretheorem[name=Assumption]{assumption}
\newcommand{\dd}[3][]{\frac{d^{#1} #2}{d {#3}^{#1}}}
\newcommand{\R}{\mathbb R}
\newcommand{\uX}{\underline{X}}
\newcommand{\N}{\mathbb N}
\newcommand{\bn}{\mathbf{n}}
\newcommand{\bx}{\mathbf{x}}
\newcommand{\hn}{\widehat{n}}
\newcommand{\hbn}{\mathbf{\hn}}
\newcommand{\hf}{\widehat{f}}
\newcommand{\hX}{\widehat{X}}
\newcommand{\oX}{\overline{X}}
\newcommand{\gX}{\mathcal{X}}
\newcommand{\myref}[1]{\cref{#1}\mynameref{#1}{\csname r@#1\endcsname}}
\newcommand{\Myref}[1]{\Cref{#1}\mynameref{#1}{\csname r@#1\endcsname}}
\def\mynameref#1#2{%
    \begingroup
    \edef\@mytxt{#2}%
    \edef\@mytst{\expandafter\@thirdoffive\@mytxt}%
    \ifx\@mytst\empty\else
    \space(\nameref{#1})\fi
    \endgroup
}
\newcommand{\prt}{\mathcal{I}}
\crefname{condition}{condition}{conditions}
\Crefname{condition}{Condition}{Conditions}
\crefname{assumption}{assumption}{assumptions}
\Crefname{assumption}{Assumption}{Assumptions}
\crefname{figure}{figure}{figures}
\crefname{equation}{equation}{equations}
\newcommand{\players}{\mathcal{N}}
\DeclareMathOperator{\sgn}{sgn}
\newcommand{\prts}{\boldsymbol{\mathcal{I}}}
\newcommand{\bS}{\mathbf{S}}
\begin{document}
\maketitle

\begin{abstract}
I study sequential contests where the efforts of earlier players may be disclosed to later players by nature or by design. The model has a range of applications, including rent seeking, R\&D, oligopoly, public goods provision, and tragedy of the commons. I show that information about other players' efforts increases the total effort. Thus, the total effort is maximized with full transparency and minimized with no transparency. I also show that in addition to the first-mover advantage, there is an earlier-mover advantage. Finally, I derive the limits for large contests.
\end{abstract}

\ifx\shortversion\undefined
\emph{JEL}: C72, C73, D72, D82, D74

\emph{Keywords}: contest design, oligopoly, public goods, rent-seeking, R\&D
\else
\fi

\section{Introduction} \label{S:intro}

Many economic interactions have contest-like structures, with payoffs that increase in players' own efforts and decrease in the total effort. Examples include oligopolies, public goods provision, tragedy of the commons, rent seeking, R\&D, advertising, and sports. Most of the literature assumes that effort choices are simultaneous. Simultaneous contests have convenient properties: the equilibrium is unique, is in pure strategies, and is relatively easy to characterize.

In this paper, I study contests where the effort choices are not necessarily simultaneous. In most real-life situations some players can observe their competitors' efforts and respond appropriately to those choices. However, earlier movers can also anticipate these ensuing responses and therefore affect the behavior of later movers. Each additional period in a sequential contest adds complexity to the analysis, which might explain why most previous studies focus on simultaneous or two-period models. In this paper, I characterize equilibria for an arbitrary sequential contest and analyze how the information about other players' efforts influences the equilibrium behavior.

Contests may be sequential by nature or by design. For example, in rent-seeking contests, firms lobby the government to achieve market power. One tool that regulators can use to minimize such rent-seeking is a disclosure policy. A non-transparent disclosure policy would lead to simultaneous effort choices, but a full transparency policy would lead to a fully sequential contest. There may be potentially intermediate solutions as well, where such information is revealed only occasionally. Over the last few decades, many countries have introduced new legislation regulating transparency in lobbying activity. This list includes the United States (Lobbying Disclosure Act, 1995; Honest Leadership and Open Government Act, 2007), the European Union (European Transparency Initiative, 2005), and Canada (Lobbying Act, 2008). However, there are significant cross-country differences in regulations. For example, lobbying efforts in the U.S. must be reported quarterly, whereas in the E.U., reporting occurs annually and on a more voluntary basis.

Another classic example of a contest is research and development (R\&D), where the probability of a scientific breakthrough is proportional to agents' research efforts. The question is how to best organize the disclosure rules to maximize aggregate research efforts. In some academic fields it is common to present early findings in working papers and conferences. In other fields these efforts are kept confidential until the work has been vetted and published in a journal. Similarly, when announcing an R\&D contest, the organizer can choose a transparency level: whether to use a public leaderboard or perhaps keep the entries secret until the deadline.

I analyze a model of sequential contests and provide two main results. First, I characterize all equilibria for any given sequential contest, i.e., for any fixed disclosure rule. The standard backward-induction approach requires finding best-response functions every period and substituting them recursively. This solution method is not generally tractable or even feasible. Instead, I introduce an alternative approach, in which I characterize each best-response function by its inverse. This method pools all the optimality conditions into one necessary condition and solves the resulting equation just once. I prove that for any contest the equilibrium exists and is unique. Importantly, the characterization theorem shows how to compute the equilibrium.

In the second main result I show that the information about other players' efforts strictly increases the total effort. Consequently, the optimal contest is always one of the extremes. When efforts are desirable (as in R\&D competitions), the optimal contest is one with full transparency. When the efforts are undesirable (as in rent-seeking), the optimal contest is one with hidden efforts. The intuition behind this result is simple. Near the equilibrium, the players' efforts are strategic substitutes. Therefore, players whose actions are disclosed have an additional incentive to exert effort to discourage later players' efforts. If the discouragement effect were strong enough to reduce the total effort, this would offer profitable deviations for some players. Therefore, near the equilibrium the discouragement effect is less than one-to-one. It increases earlier-movers efforts more than it reduces later-movers efforts, therefore increasing total effort. While there could be indirect effects that change the conclusions, I show that near the equilibrium efforts are higher-order strategic substitutes, and therefore the result still holds.

The information about other players' efforts is important both qualitatively and quantitatively. The sequential contest with five players ensures a higher total effort than the simultaneous contest with 24 players. The differences become even larger with larger contests. For example, a contest with 14 sequential players achieves higher total effort than a contest with 16,000 simultaneous players. Therefore, the information about other players' efforts is at least as important as other characteristics of the model, such as the number of players.

I also generalize the first-mover advantage result by \cite{dixit_strategic_1987}, who showed that a player who pre-commits chooses a greater level of effort and obtains a higher payoff than his followers. This leader exploits two advantages: he moves earlier and has no direct competitors. With the characterization result, I can further explore this question and compare players' payoffs and effort levels in an arbitrary sequential contest. I show that there is a strict \emph{earlier-mover advantage}---earlier players choose greater efforts and obtain higher payoffs than later players.

Finally, I provide insights for large contests. I derive a convenient approximation result in which the equilibrium efforts are directly computed using a simple formula. This result allows me to show that as the number of players becomes large, the total effort converges to the prize's value regardless of the contest structure. However, the speed of convergence to this competitive outcome or full rent-dissipation level is different under different disclosure policies. In simultaneous contests the rate of convergence is linear, whereas in sequential contests it is exponential.

My results have applications in various branches of economics, including oligopoly theory, contestability, rent-seeking, research and development, and public goods provision. For example, they provide a natural foundation for the contestability theory: if firms enter the market sequentially, the market could be highly concentrated but close to the competitive equilibrium in terms of equilibrium outcomes. The early movers would produce most of the output, but the late-movers would behave as an endogenous competitive fringe by being ready to produce more as soon as earlier players attempt to abuse their market power. Similarly, my results explain a paradox in the rent-seeking literature: explaining rent dissipation with strategic agents requires the assumption of an unrealistically large number of players. My results show that in a sequential rent-seeking contest, only a small number of active players is sufficient to achieve almost full rent dissipation.

\paragraph{Literature:}
The simultaneous version of the model has been studied extensively, starting from \cite{cournot_mathematiques_1838}. The literature on Tullock contests was initiated by \cite{tullock_welfare_1967,tullock_social_1974} and motivated by rent-seeking \citep{krueger_political_1974}.\footnote{See \cite{nitzan_modelling_1994,konrad_strategy_2009,vojnovic_contest_2015} for literature reviews on contests.} The most general treatment of simultaneous contests is provided by the literature on aggregative games \citep{selten_preispolitik_1970,acemoglu_aggregate_2013,jensen_aggregative_2018}. My model is an aggregative game only in the simultaneous case (see \cref{A:aggregative} for details).

The only sequential contest that has been studied extensively is the first-mover contest. It was introduced by \cite{von_stackelberg_marktform_1934}, who studied quantity leadership in an oligopoly.  \cite{dixit_strategic_1987} showed that there is a first-mover advantage in contests. Relatively little is known about contests with more than two periods. The only paper prior to this that studied sequential Tullock contests with more than two periods is \cite{glazer_sequential_2000}, who characterized the equilibrium in the sequential three-player Tullock contest.\footnote{\cite{kahana_sequential_2018} is an independent and concurrent contribution that provides equilibrium characterization for any fully sequential $n$-player Tullock contest. In contrast to my paper, they do not study contests with multiple players moving at once, and their approach is not applicable to other models.
} The only class of contests where equilibria are fully characterized for sequential contests are oligopolies with linear demand.\footnote{
\cite{daughety_beneficial_1990} used such a model to show that an oligopoly where players are divided between two periods is more concentrated but also closer to competitive equilibrium than an oligopoly where all players move at once. \cite{hinnosaar_stackelberg_2020} provides a literature review and shows that the linear oligopoly model has unique properties that fail when the demand is not linear.
}

More is known about large contests. Perfect competition (Marshall equilibrium) is a standard assumption in economics, and it is a baseline with which to understand its foundations. \cite{novshek_cournot_1980} showed that Cournot equilibrium exists in large markets and converges to the Marshall equilibrium. \cite{robson_stackelberg_1990} provided further foundations for Marshall equilibrium by proving an analogous result for large sequential oligopolistic markets. In this paper, I take an alternative approach. Under stronger assumptions about payoffs, I provide a full characterization of equilibria with any number of players and any disclosure structure, including simultaneous and sequential contests as opposite extremes. This allows me not only to show that the large contest limit is the Marshall equilibrium but also to study the rates of convergence under any contest structure.

The paper also contributes to the contest design literature. Previous papers on contest design include \citet{taylor_digging_1995}, \citet{che_optimal_2003}, \citet{moldovanu_optimal_2001,moldovanu_contest_2006}, and \citet{olszewski_large_2016}, which have focused on contests with private information. \cite{halac_contests_2017} studied contest design in the presence of informational externalities when players learn about the feasibility of the project. In this paper I study contest design on a different dimension: how to optimally disclose other players' efforts, when players move sequentially, in order to minimize or maximize total effort.

Finally, similar connections between disclosures and subsequent actions have been found in other settings. For example, \cite{fershtman_dynamic_1991}, \cite{varian_sequential_1994}, and \cite{wirl_dynamic_1996} used a model of dynamic voluntary public goods provision to show that if contributions are adjusted after observing earlier contributions, this may increase the free-riding problem. \cite{admati_joint_1991} and \cite{bonatti_collaborating_2011} showed similar effects in dynamic team production problems. While the driving forces in these papers are similar to the discouragement effect studied herein, none of these works addressed higher-order effects and their implications for resulting equilibria.

The paper also helps to explain empirical findings. For example, there is widespread empirical evidence of earlier-mover advantage in consumer goods markets. According to a survey by  \cite{kalyanaram_order_1995}, there is a negative relationship between a brand's entry time and brand's market share in many mature markets, including pharmaceutical products, investment banks, semiconductors, and drilling rigs. For example, \cite{bronnenberg_brand_2009} studied brands of typical consumer packaged goods and found a significant early entry advantage. The advantage is strong enough to drive the rank order of market shares in most cities.
\cite{lemus_dynamic_2021} used observational data and a lab experiment to study the impact of public leaderboards in prediction contests. They found that public leaderboards encouraged some players and discouraged others, but the overall effect was positive, improving the prediction contest's quality.

\section{Model} \label{S:model}

There are $n$ identical players $\players = \{1,\dots,n\}$ who arrive to the contest sequentially and make effort choices on arrival. At $T-1$ points in time, the sum of efforts by previous players is publicly disclosed. These disclosures partition players into $T$ groups, denoted by $\prts = (\prt_1,\dots,\prt_T)$. In particular, all players in $\prt_1$ arrive before the first disclosure and therefore have no information about other players' efforts. All players in $\prt_t$ arrive between disclosures $t-1$ and $t$ and therefore have exactly the same information: they observe the total effort of players arriving prior to disclosure $t-1$. I refer to the time interval in which players in the group $\prt_t$ arrived as period $t$. As all players are identical, the disclosure rule of the contest is fully described by the vector $\bn = (n_1,\dots,n_T)$, where $n_t = |\prt_t|$ is the number of players arriving in period $t$.\footnote{Equivalently the model can be stated as follows: $n$ players are divided across $T$ time periods, either exogenously or by the contest designer.}

Each player $i$ chooses an individual effort $x_i \geq 0$ at the time of arrival. I denote the profile of effort choices by $\bx = (x_1,\dots,x_n)$, the total effort in the contest by $X=\sum_{i=1}^n x_i$, and the cumulative effort after period $t$ by $X_t = \sum_{s=1}^t \sum_{i \in \prt_s} x_i$.
By construction, the cumulative effort before the contest is $X_0 = 0$, and the cumulative effort after period $T$ is the total effort exerted during the contest, i.e., $X_T = X$. \Cref{F:diagram_logic_0_notation} illustrates the notation with an example of the four-period contest $\bn = (3,1,1,2)$.
\begin{figure}[!ht]
    \centering
    \includegraphics[trim={0 8pt 0 8pt},clip,width=0.95\linewidth]{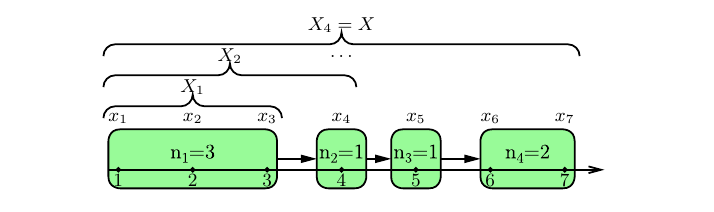}
    \caption{A contest with $7$ players and $3$ disclosures. Players 1 to 3 choose efforts $x_1$, $x_2$, and $x_3$ independently; player 4 observes $X_1=x_1+x_2+x_3$, player 5 observes $X_2=X_1+x_4$, and players 6 and 7 observe $X_3 = X_2+x_5$.}
    \label{F:diagram_logic_0_notation}
\end{figure}

Players compete for a prize of size one, the probability of winning is proportional to the level of effort, and the marginal cost of effort is one. I therefore assume the \emph{normalized Tullock payoffs}, with
\begin{equation} \label{E:utullock}
  u_i(\bx) = \frac{x_i}{X} - x_i.
\end{equation}
The model can also be interpreted as an oligopoly with unit-elastic inverse demand function $P(X)=\frac{1}{X}$ and marginal cost $c=1$, or a public goods provision (or tragedy of the commons) with the marginal benefit of private consumption $MB(X)=\frac{1-X}{X}$. 

I study pure-strategy subgame-perfect equilibria, a natural equilibrium concept in this setting: there is no private information, and earlier arrivals can be interpreted as having greater commitment power. I show that there always exists a unique equilibrium.
Throughout the paper, I maintain a few assumptions that simplify the analysis.
First, there is no private information. Second, the arrival times and the disclosure rules are fixed and common knowledge. Third, each player makes an effort choice just once upon arrival. Fourth, disclosures make cumulative efforts public. In \cref{S:discussion}, I discuss the extent to which the results rely on each of these assumptions and explain how the results extend to more general sequential games.

\section{Example} \label{S:example}

The standard Tullock contest has $n$ identical players who make their choices in isolation. Each player $i$ chooses effort $x_i$ to maximize the payoff \eqref{E:utullock}. The optimal efforts have to satisfy the first-order condition
\begin{equation}
  \frac{1}{X} - \frac{x_i}{X^2} - 1=0.
\end{equation}
Combining the optimality conditions leads to a total equilibrium effort $X^* = \frac{n-1}{n}$ and individual efforts $x_i^* = \frac{n-1}{n^2}$. The equilibrium is unique, easy to compute, and easy to generalize in various directions, which may explain the widespread use of this model in various branches of economics.

\subsection{The Problem With Standard Backward Induction} \label{SS:problem}

Consider next a three-player version of the same contest, but the players arrive sequentially and their efforts are instantly publicly disclosed. That is, players 1, 2, and 3 make their choices $x_1, x_2$, and $x_3$ after observing the efforts of previous players. I will first try to find equilibria using the standard backward-induction approach.

Player $3$ observes the total effort of the previous two players, $X_2 = x_1+x_2 < 1$ and maximizes the payoff.
The optimality condition for player $3$ is
\begin{equation} \label{E:x3X2_direct3}
\frac{1}{X_2+x_3} - \frac{x_3}{(X_2+x_3)^2} - 1 = 0
\;\;\;
\Rightarrow
\;\;\;x_3^*(X_2) = \sqrt{X_2} - X_2.
\end{equation}
Note that histories where the cumulative effort reaches one, i.e., $X_2 \geq 1$ or $x_1 \geq 1$, imply non-positive payoffs for some players and this cannot happen in equilibrium. Now, player $2$ observes $x_1<1$ and knows $x_3^*(X_2)$, and therefore maximizes
\begin{equation*}
\max_{x_2 \geq 0} \frac{x_2}{x_1+x_2+x_3^*(x_1+x_2)} - x_2
=
\max_{x_2 \geq 0} \frac{x_2}{\sqrt{x_1+x_2}} - x_2.
\end{equation*}
The optimality condition for player 2 is
\begin{equation*}
  \frac{1}{\sqrt{x_1+x_2}} -\frac{x_2}{2(x_1+x_2)^{\frac{3}{2}}} - 1 = 0.
\end{equation*}
For each $x_1 \in [0,1)$, this equation defines a unique best-response,
\begin{equation} \label{E:x2x1_direct3}
x_2^*(x_1)
=
\frac{1}{12}-x_1
+
\frac{
    \left(
    8 \sqrt{27 x_1^3 (27 x_1+1)} + 216 x_1^2+36 x_1+1
    \right)^{\frac{2}{3}}
    +24 x_1+1
}{
    12 \left(
    8 \sqrt{ 27 x_1^3 (27 x_1+1)} +216 x_1^2+36 x_1+1
    \right)^{\frac{1}{3}}
}.
\end{equation}

Finally, player $1$'s problem is
\begin{equation*}
  \max_{x_1 \geq 0} \frac{x_1}{x_1+x_2^*(x_1) +
  x_3^*(x_1+x_2^*(x_1))} - x_1,
\end{equation*}
where $x_2^*(x_1)$ and $x_3^*(X_2)$ are defined by \cref{E:x3X2_direct3,E:x2x1_direct3}.
Although the problem is not complex, it is not tractable.
Moreover, the direct approach is not generalizable for an arbitrary number of players. In fact, the best response function does not have an explicit representation for contests with a larger number of periods.

\subsection{Inverted Best-Response Approach} \label{SS:IBR}

In this paper I introduce a different approach. Instead of characterizing individual best-responses $x_i^*(X_{t-1})$, or the total efforts induced by $X_{t-1}$, i.e., $X^*(X_{t-1})$, I characterize the inverse of this relationship. For any level of total effort $X$, the inverted best-response function $f_{t-1}(X)$ specifies the cumulative effort $X_{t-1}$ prior to period $t$, that is consistent with total effort being $X$, given that the players in periods $t,\dots,T$ behave optimally.

To see how the characterization works, consider the three-player sequential contest again. In the last period, player $3$ observes $X_2$ and chooses $x_3$. Equivalently we can think of his problem as choosing the total effort $X \geq X_2$ by setting $x_3 = X-X_2$, i.e.,
\begin{equation*}
\max_{X \geq X_2} \frac{X-X_2}{X} - (X-X_2)
\;\;\;
\Rightarrow
\;\;\;
\frac{1}{X}
-\frac{X-X_2}{X^2} - 1
=
\frac{X_2}{X^2} - 1
= 0
,
\end{equation*}
which implies $X_2 = X^2$. That is, if the total effort in the contest is $X$, then before player $3$'s action, the cumulative effort had to be $f_2(X)=X^2$; otherwise, player $3$ would not be behaving optimally.

We can now think of player $2$'s problem as choosing $X \geq X_1=x_1$, which he can induce by making sure that the cumulative effort after his move is $X_2 = f_2(X)$, setting $x_2 = f_2(X) - X_1$. Therefore, his maximization problem can be written as
\begin{equation} \label{E:example_foc2}
  \max_{X \geq X_1} \frac{f_2(X)-X_1}{X} - (f_2(X)-X_1)
  \;\;\Rightarrow\;\;
\frac{f_2'(X)}{X}
-\frac{f_2(X)-X_1}{X^2} - f_2'(X) = 0.
\end{equation}
This is the key equation to examine in order to see the advantage of the inverted best-response approach. \Cref{E:example_foc2} is nonlinear in $X$ and therefore in $x_2$, which causes the difficulty for the standard backward-induction approach.
Solving this equation every period for the best-response function leads to complex expressions, and the complexity increases with each step of the recursion.
However, \eqref{E:example_foc2} is linear in $X_1$, making it easy to derive the inverted best-response function
\begin{equation*}
  f_1(X) = X_1 = f_2(X) - f_2'(X) X(1-X) = X^2 (2X -1).
\end{equation*}
The condition $X_1=f_1(X)$ aggregates the two necessary conditions of equilibrium into one, by capturing the best responses of players $2$ and $3$. It simply states that if the total effort at the end of the contest is $X$, then the cumulative effort $X_1$ had to be $f_1(X)$ after player $1$. Otherwise either player $2$ or player $3$ is not behaving optimally.

Note that $X < \frac{1}{2}$ cannot be induced by any $x_1$, as even if $x_1 = 0$, the total effort chosen by players $2$ and $3$ would be $\frac{1}{2}$. Inducing total effort below $\frac{1}{2}$ would require player 1 to exert negative effort, which is not possible. Therefore, $f_1(X)$ is defined over the domain
$\left[\frac{1}{2},1\right]$, and it is strictly increasing in this interval.

Player $1$ knows that he can induce total effort $X \geq \frac{1}{2}$ by choosing effort $x_1=f_1(X)$. Therefore we can write his maximization problem as
\begin{equation*}
\max_{X \geq \frac{1}{2}} \frac{f_1(X)}{X} - f_1(X)
\;\;\;
\Rightarrow
\;\;\;
\frac{f_1'(X)}{X}- \frac{f_1(X)}{X^2} - f_1'(X) = 0,
\end{equation*}
which implies
\begin{equation} \label{E:f0_indirect3}
0 = f_0(X) = f_1(X) - f_1'(X) X (1-X) = X^2 (6X^2 - 6X -1).
\end{equation}
This equation has again a simple interpretation---the total equilibrium quantity $X^*$ must be consistent with optimal behavior of all three players, which is now captured by the function $f_0(X)$, and with the fact that before any player chooses the action, the cumulative effort is $X_0 = 0$. Equation \cref{E:f0_indirect3} gives three candidates for the total equilibrium effort $X^*$. It is either $0$, $\frac{1}{2}-\frac{1}{2 \sqrt{3}} < \frac{1}{2}$, or $\frac{1}{2}+\frac{1}{2 \sqrt{3}} > \frac{1}{2}$. Only the highest root $X^* = \frac{1}{2}+\frac{1}{2 \sqrt{3}} \approx 0.7887$ constitutes an equilibrium.

The advantage of the inverted best-response approach is that instead of finding solutions to nonlinear equations that become increasingly complex with each recursion, this method combines all of the first-order necessary conditions into a single equation that is then solved only once.

There is a simple recursive dependence in each period that determines how the inverted best-response function evolves. At the end of the contest, i.e., after period $3$, the total effort is $f_3(X)=X$. In each of the previous periods it is equal to $f_{t-1}(X) = f_t(X) - f_t'(X) X (1-X)$. Extending the analysis from three sequential players to four or more sequential players is straightforward. It requires applying the same rule more times and solving a somewhat more complex equation at the end.

\section{Characterization} \label{S:characterization}

The main technical result of the paper is the characterization theorem (\cref{T:characterization}). It shows that each contest $\bn=(n_1,\dots,n_T)$ has a unique equilibrium and characterizes it using the inverted best-response functions $f_0,\dots,f_T$. These functions are recursively defined according to the same rule as in the previous example. The function $f_T(X)$ specifies the cumulative effort after the last period $T$ that is consistent with total effort $X$, which is clearly $f_T(X)=X$. In previous periods, the function satisfies the same rule
\begin{equation} \label{E:ftrecdef}
  f_{t-1}(X)
  = f_t(X) - n_t f_t'(X) X (1-X),
  \;\;\;
  \forall t \in \{1,\dots,T\},
  \text{where }f_T(X)=X.
\end{equation}

The only difference with the example is the term $n_t$. If there are multiple players in period $t$, then each of them has only a fractional impact on the followers' optimal responses. This means that the effect on inverted best-responses is multiplied by $n_t$.

\begin{theorem}[Characterization Theorem] \label{T:characterization}
  Each contest $\bn$ has a unique equilibrium.
  The equilibrium strategy of player $i$ in period $t$ is\footnote{Function $f_t^{-1}$ is the inverse of $f_t(X)$ in the interval $[\uX_t,1]$, where $\uX_t$ is the highest root of $f_t$. The proof of the theorem shows that $\uX_t<1$ and $f_t(X)$ is strictly increasing this interval.}
  \begin{equation} \label{E:brfunction}
    x_i^*(X_{t-1})
    =
    \begin{cases}
    \frac{1}{n_t} \left[ f_t(f_{t-1}^{-1}(X_{t-1})) - X_{t-1} \right]
    & \forall X_{t-1} < 1, \\
    0 & \forall X_{t-1} \geq 1.
    \end{cases}
  \end{equation}
  In particular, the total equilibrium effort $X^*$ is the highest root of $f_0(X)=0$, and the equilibrium effort of player $i \in \prt_t$ is $x_i^* =  \frac{1}{n_t} \left[ f_t(X^*) - f_{t-1}(X^*) \right]$.
\end{theorem}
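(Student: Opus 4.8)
The plan is to prove existence, uniqueness, and the explicit characterization by induction on the period structure, using \cref{C:cond1} to guarantee that each inverted best-response function is invertible on the relevant domain. The backbone of the argument is that under subgame perfection, a player in period $t$ who observes cumulative effort $X_{t-1}$ effectively chooses the final total effort $X$, anticipating that all later players respond optimally; the map from $X$ back to the required predecessor effort is exactly $f_{t-1}$. So first I would set up the induction hypothesis: fix a period $t$ and a history with cumulative effort $X_{t-1} \in [\uX_t, 1)$, and assume that in every subgame starting from period $t+1$ onward, if the cumulative effort entering period $t+1$ is $X_t$, then the induced total effort is the unique $X \in [\uX_{t+1},1]$ solving $f_t(X) = X_t$ (well-defined and unique because $f_t$ is strictly increasing on $[\uX_t,1]$ by part 3 of \cref{C:cond1}, and $\uX_t \le \uX_{t+1}$ hmm --- I must be careful about the direction of the root ordering, since the roots move toward $1$ as $t$ decreases, so $\uX_{t+1} \le \uX_t$).

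Second, I would solve the period-$t$ player's problem. A representative player $i \in \prt_t$, taking the other $n_t - 1$ simultaneous movers' efforts as given and anticipating the continuation, chooses $x_i$ to maximize $\frac{x_i}{X} - x_i$ where $X$ is the total effort induced. The key reduction is that in a symmetric equilibrium of the period-$t$ subgame, each of the $n_t$ players contributes equally, and the induced total effort $X$ satisfies $X_t = f_t(X)$ with $X_t = X_{t-1} + \sum_{j \in \prt_t} x_j$. Differentiating the payoff with respect to the player's own effort, substituting the continuation relation $X_t = f_t(X)$ (so that $dX_t/dX = f_t'(X)$ and each player internalizes a $1/n_t$ share of the marginal effect on $X_t$), and imposing symmetry should yield precisely the first-order condition that collapses to the recursion $f_{t-1}(X) = f_t(X) - n_t f_t'(X) g(X)$ of \eqref{E:recdef}. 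This establishes that the predecessor effort consistent with total $X$ and optimal period-$t$ behavior is $X_{t-1} = f_{t-1}(X)$, advancing the induction by one step and simultaneously delivering the individual-effort formula $x_i^* = \frac{1}{n_t}[f_t(X^*) - f_{t-1}(X^*)]$.

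Third, for the base of the induction I would use $f_T(X) = X$ (the last group's cumulative effort is the total), verify that the induced-total map is well-defined and strictly monotone at each stage via \cref{C:cond1}, and at the top peel off period $1$ with $X_0 = 0$: the equilibrium total effort $X^*$ must satisfy $f_0(X^*) = 0$, and since \cref{C:cond1} guarantees $f_0$ is strictly increasing on $[\uX_0,1]$ with a unique highest root $\uX_0 \in (0,1)$, and strictly negative on $[\uX_1,\uX_0)$, the only candidate consistent with $X_0 = 0$ in the admissible range is $X^* = \uX_0$. The strict negativity clause (part 2) is what rules out the spurious lower roots seen in the three-player example, while strict monotonicity (part 3) gives invertibility and hence a well-defined, single-valued best response $x_i^*(X_{t-1})$ as in \eqref{E:brfunction}, with the degenerate branch $x_i^* = 0$ for off-path histories $X_{t-1} \geq 1$.

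The main obstacle I anticipate is verifying that the computed first-order condition is genuinely a global maximum rather than merely a stationary point, and that the symmetric profile is the unique equilibrium of each period-$t$ subgame (as opposed to asymmetric deviations among the $n_t$ simultaneous movers). Handling this requires checking second-order conditions or a single-crossing/concavity argument for the reduced objective $\frac{x_i}{X(x_i)} - x_i$ along the continuation manifold, and arguing that any profitable deviation would force a violation of the monotonicity established in \cref{C:cond1} --- intuitively, a player cannot profitably induce a total effort $X$ outside the increasing branch because doing so would require negative own effort or would leave total effort below the level the later players alone would generate. I would also need to confirm that restricting attention to the domain $[\uX_t, 1]$ loses no equilibria, i.e.\ that no equilibrium can have cumulative effort landing in a region where $f_t$ is non-monotone or negative; this is precisely where parts 1 and 2 of \cref{C:cond1} do the work.
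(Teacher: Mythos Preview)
Your proposal is essentially the paper's own argument: backward induction via the inverted best-response functions $f_t$, using \cref{C:cond1} for invertibility and to exclude the spurious lower roots. Two refinements are worth flagging. First, you impose symmetry among the $n_t$ simultaneous movers and speak of each internalizing a $1/n_t$ share of the marginal effect; the paper instead writes the individual first-order condition for an arbitrary $i \in \prt_t$ (where $\partial X_t/\partial x_i = 1$, not $1/n_t$) and obtains $x_i = f_t'(X)\,g(X)$, which is the \emph{same} expression for every such player---so symmetry is a consequence rather than an assumption, and summing the $n_t$ identical conditions is what produces the factor $n_t$ in the recursion. Second, the sufficiency obstacle you anticipate dissolves more cheaply than via second-order or single-crossing arguments: once the necessity step pins down a \emph{unique} interior stationary point for each player (given the others' strategies), and that point delivers a strictly positive payoff while the only corner $x_i=0$ delivers zero, the stationary point is automatically the global maximizer.
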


The proof in \cref{A:proofs} starts by showing that the polynomials $f_t$ have some helpful properties. Let $\uX_t$ be the highest root of $f_t(X)=0$. These highest roots are ordered according to $t$ as $0 = \uX_T < \uX_{T-1} < \dots < \uX_1 < \uX_0$. Moreover, for all $X \in [\uX_t,\uX_{t-1})$ the function $f_{t-1}(X) < 0$ and for all $X \in [\uX_t,1]$ the function $f_t'(X)>0$. Then, the arguments in the previous section imply that there are two types of necessary conditions for equilibria. First, total equilibrium effort $X^*$ must be consistent with cumulative effort before the contest being zero, i.e., $X^*$ must be a root $f_0(X)=0$. Second, cumulative effort cannot decrease, i.e., $f_t(X^*) \geq f_{t-1}(X^*)$ for all $t$. These conditions together with the properties with $f_t$ functions imply that the highest root of $f_0(X^*)$ is the only candidate for equilibrium. Finally, assuming the followers behave according to the strategies characterized by $f_t$ functions, each player has a unique interior optimum for each $X_t < 1$, which means that the candidate for equilibrium determined above is indeed an equilibrium.

\section{Information and Effort} \label{S:information}

In this section, I show that information increases total effort in sequential contests. Before the formal result, let me give an example. Consider contests $\bn = (1,2,1)$ and $\hbn=(1,1,1,1)$. The second contest $\hbn$ is more informative as the added disclosure after player $2$ creates a finer partition of players. Direct application of \cref{T:characterization} gives equilibrium efforts $X^* = (7 + \sqrt{13})/12 < \hX^* = (6+\sqrt{24})/12$, i.e., total equilibrium effort in more informative contest is strictly higher.

The intuition for this ranking is the following. While efforts could be strategic complements or strategic substitutes, in equilibrium the efforts are high enough to make the individual efforts strategic substitutes. Compared to contest $ \bn $, the additional disclosure of player $2$'s effort in contest $\hbn$ gives player $2$ a new reason to increase his effort: it discourages player $3$. Therefore we would expect the effort of player $2$ to be higher and the effort of player $3$ to be lower than in contest $\bn$.

The remaining question is: which of these two effects is larger? The payoff function of player $2$ is $u_2(\bx) = x_2 \left( \frac{1}{X}-1\right)$, which is strictly increasing in player's own effort $x_2$ and strictly decreasing in total effort $X$. If player $2$ could increase his effort $x_2$ in a way that the discouragement effect is so large that total effort decreases, player $2$ would happily exploit this opportunity, and the outcome would not be an equilibrium. Therefore, the discouragement effect is less than one-to-one, implying that the total effort is increased.

The full comparison of the two contests must also consider how players $1$ and $4$ respond to the change of game conditions. Their incentives are driven by indirect effects. For example, player $1$ may want to influence player $2$ to exert more or less effort in the more informative contest, depending on how this second-order impact affects other players.

Capturing the indirect effects requires some new notation. Let me use a contest $\bn =(1,2,1)$ again to illustrate the construction of relevant variables. All four players in this contest observe their own efforts (regardless of the disclosure rule), and the number of players clearly affects the outcomes of the contest. I call this the first level of information and denote it as $S_1=n=4$. More importantly, some players directly observe the efforts of some other players. Players $2$ and $3$ observe the effort of player $1$ and player $4$ observes the efforts of all three previous players. Therefore, there are five direct observations of other players' effort levels. I call this the second level of information and denote it as $S_2=5$. Finally, player $4$ observes players $2$ and $3$ observing player $1$. There are two indirect observations of this kind, which I call the third level of information and denote as $S_3=2$. In contests with more periods, there would be more levels of information---observations of observations of observations and so on.

I call a vector $\bS(\bn) = (S_1(\bn),\dots,S_T(\bn))$ the measure of information in a contest $\bn$. In the example described above, $\bS(1,2,1)=(4,5,2)$. Formally, $S_k(\bn)$ is the sum of all products of $k$-combinations of set $\{n_1,\dots,n_T\}$. For example, in a sequential $n$-player contest $\bn=(1,1,\dots,1)$, $S_k(\bn)$ is simply the number of all $k$-combinations, i.e., $S_k(\bn) = \frac{n!}{k! (n-k)!}.$
With this notation, I can now state the second main result.
\begin{theorem}[Information Theorem] \label{T:information}
  Total effort in contest $\bn$ is a strictly increasing function $X^*(\bS(\bn))$.
\end{theorem}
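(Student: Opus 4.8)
The plan is to prove \cref{T:information} in two stages. First I would establish the additive representation \eqref{E:f0sumrepr}, which on its own shows that $f_0$—and hence its highest root $X^*=\uX_0$—depends on $\bn$ only through the information vector $\bS(\bn)$, so that $X^*$ is a well-defined function $X^*(\bS(\bn))$. Then I would use this representation together with \cref{C:cond1,C:cond2} to obtain strict monotonicity.

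For the representation I would prove by backward induction on $t$ that
\[
f_t(X) = X - \sum_{k\geq 1} S_k(n_{t+1},\dots,n_T)\, g_k(X),
\]
the base case $f_T(X)=X$ being the empty sum. The inductive step feeds this into the recursion \eqref{E:recdef}. Differentiating and using the defining identity $g_k'(X)g(X) = -g_{k+1}(X)$ turns the term $-n_t f_t'(X) g(X)$ into $-n_t g_1(X) - n_t\sum_k S_k(n_{t+1},\dots,n_T)\, g_{k+1}(X)$. Collecting the coefficient of each $g_m$ then reproduces exactly the Pascal-type recursion $S_m(n_t,\dots,n_T) = S_m(n_{t+1},\dots,n_T) + n_t\, S_{m-1}(n_{t+1},\dots,n_T)$ that characterizes the elementary-symmetric sums $S_k$. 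Taking $t=0$ yields \eqref{E:f0sumrepr}; since each $g_k$ is independent of $\bn$, this delivers the well-definedness claim.

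For monotonicity I would compare two contests with $\bS(\hbn) > \bS(\bn)$, writing $\hf_0$ for the inverted best response of $\hbn$. Evaluating $\hf_0$ at $X^*=X^*(\bn)$ and subtracting $f_0(X^*)=0$ gives
\[
\hf_0(X^*) = -\sum_{k\geq 1}\bigl[S_k(\hbn)-S_k(\bn)\bigr]\, g_k(X^*).
\]
By \cref{C:cond2} (together with $g_1(X^*)=X^*(1-X^*)>0$, which holds because $X^*\in(0,1)$ by \cref{C:cond1}), every $g_k(X^*)>0$; since the bracketed differences are nonnegative with at least one strictly positive, $\hf_0(X^*)<0$. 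On the other hand $g_k(1)=0$ for all $k$ (as $g(1)=0$ propagates through $g_{k+1}=-g_k'g$), so $\hf_0(1)=1>0$. Continuity and the intermediate value theorem then produce a root of $\hf_0$ in $(X^*,1)$; as $X^*(\hbn)=\hat{\uX}_0$ is the \emph{highest} root of $\hf_0$, we get $X^*(\hbn)\geq$ this root $>X^*(\bn)$. The local version follows equivalently from the implicit-function computation $dX^*/dS_k = g_k(X^*)/f_0'(X^*)>0$, using $f_0'(X^*)>0$ from \cref{C:cond1}.

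The main obstacle is the first stage: verifying that the coefficients generated by iterating \eqref{E:recdef} are precisely the symmetric sums $S_k(\bn)$, i.e.\ matching the recursion's bookkeeping to the elementary-symmetric recursion (the sign from $g_k'g=-g_{k+1}$ must be tracked carefully). A secondary care point is that when $\hbn$ has more periods than $\bn$, the sign argument needs $g_k(X^*)>0$ for orders $k$ beyond those directly covered by \cref{C:cond2} for $\bn$ alone, so one must invoke that \cref{C:cond2} holds at the relevant point for every order appearing in the comparison (as \cref{P:cond2_tullock} secures in the Tullock case).
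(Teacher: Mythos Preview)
Your proposal is correct and follows essentially the same route as the paper: the paper's \cref{L:ftmeasures} is exactly your inductive representation (proved via the elementary-symmetric recursion $S_k(n_t,\dots,n_T)=S_k(n_{t+1},\dots,n_T)+n_tS_{k-1}(n_{t+1},\dots,n_T)$), and the comparison step evaluates $\hf_0$ at the old equilibrium $X^*$ to get $\hf_0(X^*)<0$ and concludes $\hX^*>X^*$. The only cosmetic difference is that where you invoke the intermediate value theorem on $(X^*,1)$ together with ``highest root,'' the paper instead uses directly that $\hf_0$ is strictly increasing on $[\hX^*,1]$ (part~3 of \cref{C:cond1}); your care point about needing $g_k(X^*)>0$ for orders up to the larger $\hat T$ is well taken and is glossed over in the paper's statement of \cref{C:cond2}.
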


There are two key steps in the proof. The first step shows by induction that the inverted best-response functions can be expressed using the measures of information as:
\begin{equation} \label{E:ft_withSg}
  f_t(X) = X - \sum_{k=1}^{T-t} S_k(\bn^t) g_k(X),
\end{equation}
where $\bn^t=(n_{t+1},\dots,T)$ is a vector of integers describing the sub-contest that starts after period $t$ and $\mathbf{S}(\bn^t)$ denotes its measures of information, i.e., $S_k(\bn^t)$ is the sum of all products of $k$-combinations of $\bn^t$. The functions $g_k(X)$ are defined recursively and independently of the contest $\bn$ as $g_1(X)=X (1-X)$, and for all $k>0$, $g_{k+1}(X)=-g_k'(X) X (1-X)$.
In particular, $f_0(X)$ takes the following form:
\begin{equation} \label{E:f0_withSg}
  f_0(X) = X - \sum_{k=1}^{T} S_k(\bn) g_k(X),
\end{equation}
Remember, that total equilibrium effort $X^*$ is the highest root of $f_0(X)$ and this function is strictly increasing above its highest root. The second key step of the proof shows that functions $g_k(X^*)>0$ at the equilibrium value $X^*$. Now, increasing $\bS(\bn)$ decreases the value of $f_0(X^*)$ at the original equilibrium value. Therefore the highest root of the new function $f_0(X)$ must be higher than the original one.

\Cref{E:f0_withSg} also sheds some light on the reason for this result. The positive weights on the measures of information, $g_k(X^*)$, can be interpreted as higher-order strategic substitutability terms. In particular, $g_1(X)$ captures the concavity of payoff functions---as agent $i$ increases his effort, the incentive to increase the effort further decreases and is positive for all $X$. The term $g_2(X)$ captures the standard strategic substitutability. 
In the example above, if player $2$ increases effort, player $3$ who observes this deviation has an incentive to decrease effort. The next term $g_3(X)$ captures a higher-order indirect incentive: 
player $4$, who observes the response of player $3$ has an incentive to decrease effort as well (beyond the direct effect of responding to player $2$). The fact that near equilibrium $g_k(X^*)>0$ for all $k$, means that all these effects move the equilibrium outcomes to the same direction---in more informative contests (in the sense of $\bS(\bn)$) earlier-movers exert more effort, later-movers less effort, but the total equilibrium effort is higher.

\Cref{T:information} defines a partial order on all contests---if contests can be ranked about the measures of information $\bS(\hbn)>\bS(\bn)$, i.e., $S_k(\hbn)\geq S_k(\bn)$ for all $k$ and the inequality is strict at least for one $k$, then $X^*(\bS(\hbn))>X^*(\bS(\bn))$. In the example above, increasing informativeness in contests by adding public disclosures increases vector $\bS$ and therefore total effort, or more concretely $\bS(1,2,1)=(4,5,2) < \bS(1,1,1,1)=(4,6,4,1)$.
To complete the order, we would have to know how to weigh different measures of information. \Cref{E:f0_withSg} shows that correct weights are $g_k(X^*)$; i.e., by magnitudes of discouragement effects near equilibrium. The following lemma shows that lower information measures have a higher weight.
\begin{lemma}[Decreasing Weights] \label{L:gktullock}
    $g_{k-1}(X^*)>g_k(X^*)$ for each $k \geq 2$.
\end{lemma}

While direct effects have a larger impact than the indirect ones, the indirect effects are not qualitatively unimportant. Compare for example two seven-player contests $\bn=(3,4)$ and $\hbn=(1,1,5)$. The first contest $\bn$ has $12$ direct observations whereas $\hbn$ has only $11$. Nevertheless, the total effort in $\bn$ is lower than in contest $\hbn$. This is because of indirect effects, $S_3(\hbn)=5>0$.
Intuitively, in the contest $\hbn$ player $1$ knows that in addition to influencing all followers directly, he the five last-movers through the behavior change of player $2$. This indirect effect is missing in $\bn$.

\Cref{T:information} has several direct implications summarized by the following corollary.
\begin{corollary}[Implications of the Information Theorem] \label{C:information}
  Take two contests $\bn^1$ and $\bn^2$, with corresponding partitions $\prts^1$ and $\prts^2$, and let $X^1=X^*(S(\bn_1))$ and $X^2=X^*(S(\bn_2))$ be the corresponding total equilibrium efforts.
  \begin{enumerate}
      \item Comparative statics of $\bn$: if $\bn^1 < \bn^2$,  then $X^1 < X^2$. This includes the case when $n_t^1=0 < \hn_t^2$ for some $t$, i.e., $\bn^2$ has more periods.
      \item Independence of permutations: if $\bn^1$ is a permutation of $\bn^2$, then $X^1 = X^2$.
      \item Disclosures increase effort: if $\prts^1$ is a coarser partition than $\prts^2$, then $X^1<X^2$.
      \item Homogeneity increases effort: if $\sum_t n_t^1 = \sum_t n_t^2$ and there exist $t,t'$ such that $n_t^1 n_{t'}^1 < n_t^2 n_{t'}^2$ and $n_s^1 = n_s^2$ for all $s \neq t,t'$, then $X^1 < X^2$.
      \item For any $n$, the simultaneous contest $\bn = (n)$ minimizes total effort, and the fully sequential contest $\bn = (1,1,\dots,1)$ maximizes the total effort. For fixed number of periods $T$, contests that allocate players into groups that are as equal as possible maximize the total effort.
  \end{enumerate}
\end{corollary}

The first implication is intuitive, adding players to any period or adding periods to any contest increases the total effort. Note, however, that this does not imply that the total effort increases with the total number of players. In the examples above, total effort in the three-player sequential contest was $0.7887$, whereas in the four-player simultaneous contest, it was $0.75$. The second implication is more surprising---reallocating disclosures in a way that creates a permutation of $\bn$ does not affect the total effort. For example, a first-mover contest $(1,n-1)$ gives the same total effort as the last-mover contest $(n-1,1)$. The third implication that disclosures increase effort was already discussed above. The fourth implication gives even clearer implications for the optimal contest. Namely, more homogeneous contests give higher total effort. Intuitively, a contest is more homogeneous if players are divided more evenly across periods. For example, a contest $\hbn=(2,2)$ is more homogeneous than $\bn = (1,3)$ and also has more direct observations of efforts as $2 \times 2 = 4 > 3 = 1 \times 3$.

Therefore, if the goal is to minimize the total effort (such as in rent-seeking contests), then the optimal policy is to minimize the available information, which is achieved by a simultaneous contest. Transparency gives earlier players incentives to increase efforts to discourage later players, but this discouragement effect is less than one-to-one and therefore increases total effort. On the other hand, if the goal is to maximize the total effort (such as in research and development), then the optimal contest is fully sequential as it maximizes the incentives to increase efforts through this discouragement effect. If the number of possible disclosures is limited (for example, collecting or announcing information is costly), then it is better to spread the disclosures as evenly as possible.

\section{Earlier-Mover Advantage} \label{S:earlier-mover}

\cite{dixit_strategic_1987} showed that there is a first-mover advantage. If one player can pre-commit, the first-mover chooses a strictly higher effort and achieves a strictly higher payoff than the followers. Using the tools developed here, I can explore this result further. Namely, the first mover has two advantages compared to the followers. First, he moves earlier, and his action may impact the followers. Second, he does not have any direct competitors in the same period. I can now distinguish these two aspects. For example, what would happen if $n-1$ players chose simultaneously first, and the remaining player chose after observing their efforts? More generally, in an arbitrary sequence of players, which players choose the highest efforts and which ones get the highest payoffs? The answer to all such questions turns out to be unambiguous---there is a strict earlier-mover advantage.

\begin{proposition}[Earlier-Mover Advantage] \label{P:earliermover}
    The efforts and payoffs of earlier
     players are strictly higher than for later players.
\end{proposition}

The equilibrium payoff of a player $i$ is in $u_i(\bx^*) = x_i^* \left( \frac{1}{X^*}-1\right)$, and since $X^*$ is the same for all the players, payoffs are proportional to efforts. Therefore, it suffices to show that the efforts of earlier players are strictly higher. Using \cref{T:characterization} and \cref{E:ft_withSg}, I can express the difference between the equilibrium efforts of players $i$ and $j$ from consecutive periods $t$ and $t+1$ as
\begin{equation}
  x_i^* - x_j^*
  = \sum_{k=1}^{T-t} \left[
    S_k(\bn^t)
    -
    S_k(\bn^{t+1})
  \right] g_{k+1}(X^*)
\end{equation}
where $\bn^{t+1} = (n_{t+2},\dots,n_T)$ is the sub-contest starting after period $t+1$ and $\bn^t = (n_{t+1},\bn^{t+1})$ is the sub-contest starting after period $t$. Clearly, $S_k(\bn^t) > S_k(\bn^{t+1})$ for all $k$; i.e., there is more information on all levels in a strictly longer contest. As $g_{k+1}(X^*)>0$ for each $k$ the whole sum is strictly positive. The intuition of the result is straightforward: players in earlier periods are observed by strictly more followers than the players from the later periods. Therefore, in addition to the incentives that later players have, the earlier players have an additional incentive to exert more effort to discourage later players.

\section{Large Contests} \label{S:largecontests}

Numeric comparison of simultaneous and sequential contests highlights that the information about other players' efforts is at least as an important factor in determining the total effort as other parameters, such as the number of players. For example, the total effort in the simultaneous contest with ten players is $0.9$, whereas the total effort with four sequential players is $0.9082$. A fifth sequential player increases the total effort to $0.9587$. A simultaneous contest with the same total effort requires $24$ players. \Cref{F:simseqequiv} shows that the comparison becomes even more favorable for sequential contests with large $n$.
\begin{figure}[!ht]
    \centering
    \includegraphics[trim={0 21pt 0 18pt},clip,width=0.6\linewidth]{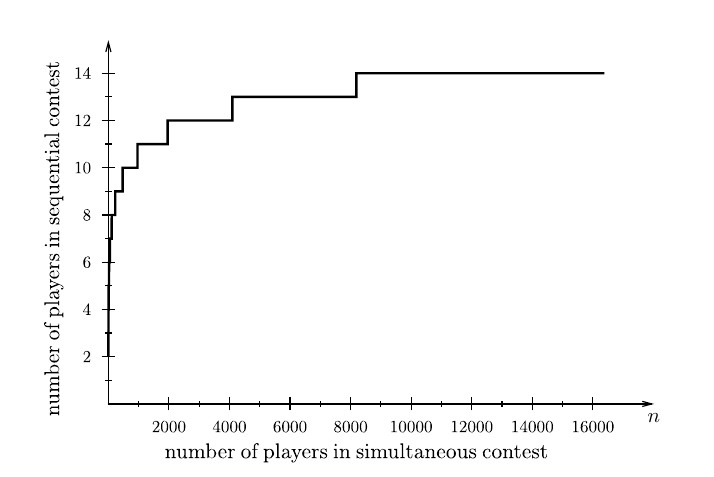}
    \caption{Number of players in a sequential contest that leads to the same total effort as a simultaneous contest with $n$ players}
    \label{F:simseqequiv}
\end{figure}

The following proposition gives the reason for this connection. As the number of players becomes large, there will be full rent dissipation, i.e., total effort converges to $1$ regardless of contest structure. However, in large simultaneous contests $1-X^* \approx \frac{1}{n}$, so the convergence is linear, whereas in large sequential contests $1-X^* \approx \frac{1}{2^n}$, which means exponential convergence.

\begin{proposition}[Large Contests] \label{P:largecontests}
  Fix $T \in \N$ and a sequence of contests $(\bn^n)_{n=3}^{\infty}$, such that contest $\bn^n$ is $n$-player contest with at most $T$ periods. Let $X^n=X^*(\bS(\bn^n))$ and for each player $i$, let $x_i^n$ the equilibrium effort in contest $\bn^n$. For all $t \leq T$ and all $i \in \prt_t^n$,
  \begin{equation} \label{E:largecontests}
    \lim_{n \to \infty} \left[
    X^n - \left( 1-\frac{1}{\prod_{t=1}^T (1+n_t^n)} \right)
    \right]
    = 0
    \;\;
    \text{and}
    \;\;
    \lim_{n \to \infty} \left[ x_i^n - \frac{1}{\prod_{s=1}^t (1+n_s^n)} \right] = 0.
  \end{equation}
\end{proposition}

In large simultaneous contests, each player chooses negligible effort $x_i^* \approx \frac{1}{n}$. In contrast, in large sequential contest, the individual equilibrium efforts are $\bx^* \approx \left( \frac{1}{2}, \frac{1}{4},\dots,\frac{1}{2^n} \right)$. The earlier movers choose much larger efforts and therefore achieve much larger payoffs than the followers. However, if they would try to exploit their dominant position, the followers would increase their efforts.\footnote{This effect is similar to contestability theory \citep{baumol_contestable_1988}, but competitive fringe arising endogeneously from the order of arrivals.}

\section{Discussion} \label{S:discussion}

I showed that each contest $\bn$ has a unique equilibrium. It is in pure strategies and simple to compute. The total equilibrium effort is strictly increasing in information. This implies that the optimal contest for maximizing total effort is fully sequential, e.g., R\&D contests benefit from full transparency. On the other hand, if the goal is to minimize the total effort, such as rent-seeking contests, the optimal contest is non-transparent, i.e., the simultaneous contest. Further, there is a strict earlier-mover advantage: players in earlier periods exert strictly greater efforts and obtain strictly higher payoffs. Total effort converges to full dissipation linearly with the number of players in large simultaneous contests but exponentially in large sequential contests.

The results in this paper hold much more generally than the model discussed herein. 
First, I assumed that players exert efforts only at their arrival, and that their efforts are publicly observable for players in the following periods. Given that players benefit from the discouragement effect, they would not hide or delay their actions. Thus, the outcomes would be unchanged if players could take hidden actions or take actions over multiple periods. This was shown by \cite{yildirim_contests_2005} in the two-player case.

Second, all results hold for a general payoff functions in the form $u_i(\bx)=x_i h(X)$. For example, in the case of Tullock contest $h(X) = \frac{v}{X}-c$. In the case of oligopoly, $x_i$ can be quantity or capacity, and $h(X)=P(X)-c$, which is the difference between inverse demand and the marginal cost. With the tragedy of the commons or public goods provision problem, $x_i$ is private consumption, and $h(X)$ is the marginal benefit of private consumption, decreasing with the total consumption of resources (or increases with the amount of public good). Naturally, the function $h(X)$ must satisfy some technical properties---intuitively, actions must be higher-order strategic substitutes. I discuss these conditions in detail in \cref{A:xhX}, describe a class of functional forms where the sufficient conditions for the results above are satisfied, and provide some examples where the analysis may fail if the conditions are not satisfied.

Third, the assumption of linear costs is important for the simplicity of characterization but can be relaxed, at least within a certain range. It is known that even Tullock contests with nonlinear costs may not have equilibria in pure strategies.\footnote{Tullock contest with nonlinear cost function $c(x_i)=x_i^r$ has a pure-strategy equilibrium if and only if $r \geq \frac{n-1}{n}$, i.e., when the cost function is either convex or mildly concave \citep{baye_solution_1994}. In other settings, such as group contests, the pure-strategy equilibria may not exists if the cost function is concave \citep{esteban_conflict_1999,esteban_linking_2011}.} In \cref{A:quadratic}, I show that the inverted best-response approach is more complicated, but still tractable when costs are quadratic, $c(x_i)=x_i+\beta x_i^2$. All results except independence of permutations still hold for a sufficiently small positive or negative $\beta$.

Finally, I assumed that agents are identical in all dimensions, except that players in later groups observe and respond to players' efforts in earlier groups. In \cref{A:asymmetric}, I show that the inverted best-response approach can be applied to players with heterogeneous payoffs. However, there is a new complication: players may find it optimal to stay inactive at different thresholds. This means that earlier-movers may sometimes find it optimal to deter entry by followers and the order of players becomes an important determinant of outcomes.
\cite{hinnosaar_price_2019} extends the methodology to another type of player heterogeneity, where the game is played on a network. Players only observe the choices of players they are linked to. This analysis shows that there is a connection between weighted measures of information and standard centrality measures from network theory.

\bibliography{zotero_contests}
\bibliographystyle{chicago}

\clearpage
\clearpage
\renewcommand{\appendixpagename}{Appendix}
\appendixpage
\appendix

\section{Proofs} \label{A:proofs}

\subsection{Proof of the Characterization Theorem (\Cref{T:characterization})} \label{A:characterization}

Before proving \cref{T:characterization}, it is useful to define the following property.\footnote{
	I am calling this property an assumption here because the following analysis applies to any payoff functions, for which $f_t$ functions defined by \cref{E:ftrecdef} satisfy this property. In the case of normalized Tullock payoffs, the assumption is always satisfied. I show this in \cref{P:cond1_tullock}.
}
	
\begin{assumption}[Inverted best responses are well-behaved] \label{C:cond1}
	Clearly, $f_T(X)=X$ has unique root $\uX_T=0$.
	For all $t=0,\dots,T-1$, the function $f_t$ has the following properties:
	\begin{enumerate}
		\item $f_t(X)=0$ has a root in $[\uX_{t+1},1]$. Let $\uX_t$ be the
		highest such root.
		\item $f_t(X)<0$ for all $X \in [\uX_{t+1},\uX_t)$.
		\item $f_t'(X)>0$ for all $X \in [\uX_t,1]$.
	\end{enumerate}
	Moreover, $\uX_0 \in (0,1)$.
\end{assumption}	
	
The proof of \cref{T:characterization} has two parts. The first part is \cref{P:cond1_tullock} in \cref{A:tullock_cond1} that shows that $f_t$ functions satisfy \cref{C:cond1}. The proof relies on keeping track of the roots of $f_t$ functions. The second part in \cref{A:cond1_characterization} establishes the theorem's claims. Briefly, it shows that a behavior where each player $i$ in each period $t$ behaves according to \cref{E:brfunction} and expects that total effort induced by cumulative effort $X_t$ to be $f_t^{-1}(X)$, is an equilibrium and in fact it is the only equilibrium. The proof is divided to five lemmas:
\begin{enumerate}
	\item \Cref{L:characterization_01} shows that in all histories where $X_{t-1}<1$, each player in period $t$ chooses strictly positive effort, but these added efforts in period $t$ are small enough so that the cumulative effort after period $t$ remains strictly below one, $X_t < 1$. On the other hand, in histories where $X_{t-1} \geq 1$, the players in period $t$ exert no effort.
	Therefore, on the equilibrium path $X_t < 1$ for all $t$.
	\item \Cref{L:characterization_necc} shows that $X_t = f_t(X)$ is a necessary condition for equilibrium. In particular, $f_0(X)=0$ is a necessary condition for equilibrium, and therefore $X^*$ must be a root of $f_0(X)$.
	\item \Cref{L:characterization_welldefined} shows that under \cref{C:cond1}, the inverse function $f_{t-1}^{-1}(X_{t-1})$ is well-defined and strictly increasing, $f_{t-1}^{-1}(0)=\uX_{t-1}$ and $f_{t-1}^{-1}(1)=1$.
	\item \Cref{L:characterization_brs} shows that the best-response function of player $i \in \prt_t$ after cumulative effort $X_{t-1}$ is $x_i^*(X_{t-1}) = \frac{1}{n_t} \left[ f_t(f_{t-1}^{-1}(X_{t-1})) - X_{t-1} \right]$ for all $X_{t-1}<1$ and $x_i^*(X_{t-1})=0$ for all $X_{t-1}\geq 0$. On the equilibrium path the individual efforts are $x_i^* = \frac{1}{n_t} \left[ f_t(X^*) - f_{t-1}(X^*) \right]$.
	\item Finally, \cref{L:characterization_SOC} verifies that the unique candidate for equilibrium, i.e., $\bx^*$ specified in the theorem, is indeed an equilibrium, which completes the proof.
\end{enumerate}
The combination of these results proves \cref{T:characterization}.

\subsection{Proof that \Cref{C:cond1} is Satisfied (\Cref{P:cond1_tullock})} \label{A:tullock_cond1}

\begin{proposition} \label{P:cond1_tullock}
	Inverted best responses $f_0,\dots,f_T$ defined by \cref{E:ftrecdef} are well-behaved.
\end{proposition}

Before giving the proof of \cref{P:cond1_tullock}, let me briefly describe its key idea. The function $f_{t+1}$ is a polynomial of degree $r=T-t$, so it can have at most $r$ roots. By keeping track of all the roots, I show by induction that all $r$ roots are real and in $[0,1)$, with the highest being $\uX_{t+1}$. Therefore, all $r-1$ roots of the derivative $f_t'$ are also real and in $[0,\uX_{t+1})$. Evaluating $f_t$ at $\uX_{t+1}$ and $1$, we get
\begin{align*}
	f_t(\uX_{t+1})
	&= \underbrace{f_{t+1}(\uX_{t+1})}_{=0}
	- n_{t+1} \underbrace{f_{t+1}'(\uX_{t+1})}_{>0} \underbrace{\uX_{t+1} (1-\uX_{t+1})}_{>0}
	< 0 \\
	f_t(1) &= f_{t+1}(1)
	- n_{t+1} f_{t+1}'(1) \underbrace{1 (1-1)}_{=0}
	= f_{t+1}(1) = \dots = f_T(1)=1>0.
\end{align*}
This implies that $f_t$ must have a root $\uX_t \in (\uX_{t+1},1)$. Moreover, since the
highest root of its derivative is again below $\uX_t$, it is strictly
increasing in $[\uX_t,1]$. Finally, I show that the second highest root of
$f_t$ is strictly below $\uX_{t+1}$, so that $f_t(X)<0$ for all
$[\uX_{t+1},\uX_t)$. Proving this requires keeping track of all the roots.

\begin{proof}[Proof of \cref{P:cond1_tullock}]
	First note that $f_T(X)=X$ is a polynomial of degree 1, and each step of the recursion adds one degree, so $f_t(X)$ is a polynomial of degree $T+1-t$, which I denote by $r$ for brevity. The following two technical lemmas describe the values of the polynomials $f_t$ at $1$ and the number of roots at $0$.
	
	\begin{lemma}
		\label{L:ftat1}
		$f_t(1)=1$ for all $t=0,\dots,T$.
	\end{lemma}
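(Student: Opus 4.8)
The plan is a straightforward backward induction on $t$, running from $t=T$ down to $t=0$, exploiting the single structural fact that $g(1)=1\cdot(1-1)=0$. The base case is immediate: since $f_T(X)=X$ by definition, $f_T(1)=1$.

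For the inductive step, suppose $f_t(1)=1$ for some $t\in\{1,\dots,T\}$. Evaluating the recursive definition \eqref{E:recdef} at $X=1$ gives
\[
f_{t-1}(1)=f_t(1)-n_t f_t'(1)\,g(1).
\]
Because $g(1)=0$, the correction term vanishes regardless of the value of $f_t'(1)$, and hence $f_{t-1}(1)=f_t(1)=1$. This closes the induction and establishes $f_t(1)=1$ for all $t=0,\dots,T$.

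The point worth emphasizing is that there is essentially no obstacle here: the claim is insensitive to the (potentially large) polynomial degree $T+1-t$ of each $f_t$ and to the values of the derivatives $f_t'(1)$, since the entire recursive correction is scaled by $g$, which is pinned to $0$ at the boundary $X=1$. This is precisely the telescoping computation already displayed in the sketch preceding the proof, namely $f_t(1)=f_{t+1}(1)=\dots=f_T(1)=1$; the lemma simply records it formally. The only care needed is to state the induction in the correct (downward) direction so that it matches the recursion, which produces $f_{t-1}$ from $f_t$.
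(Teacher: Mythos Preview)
Your proof is correct and essentially identical to the paper's: both use backward induction from $f_T(1)=1$, exploiting $g(1)=0$ so that the correction term $n_t f_t'(1)g(1)$ vanishes at each step. The paper simply writes this as the telescoping chain $f_{t-1}(1)=f_t(1)=\dots=f_T(1)=1$.
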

	\begin{proof}
		$f_{t-1}(1)
		= f_t(1) - n_t f_t'(1) 1 (1-1)
		= f_t(1)
		= f_T(1) = 1$.
	\end{proof}
	
	\begin{lemma}
		\label{L:ftat0}
		$f_t(0) = 0$ for all $t=0,\dots,T$. Depending on $\bn$, there could be either one or two roots at zero:
		\begin{enumerate}
			\item If $n_s = 1$ for some $s>t$, then $f_t(X)$ has exactly two roots at zero.
			\item Otherwise, i.e., if $n_s \neq 1$ for all $s>t$, then $f_t(X)$ has exactly one root at zero.
		\end{enumerate}
	\end{lemma}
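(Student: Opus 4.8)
The plan is to read off the order of vanishing of $f_t$ at $X=0$ directly from its low-order Taylor coefficients, propagating them through the recursion \eqref{E:recdef} by matching powers of $X$. First, $f_t(0)=0$ for every $t$ follows from a one-line downward induction: $f_T(0)=0$, and since $g(0)=0$, the recursion gives $f_{t-1}(0)=f_t(0)-n_t f_t'(0)g(0)=f_t(0)$. So the real content is the number of roots at zero, i.e. the smallest power of $X$ present in $f_t$. Writing $f_t(X)=\sum_{j\ge 1} a_j^{(t)} X^j$ (the sum starts at $j=1$ because $f_t(0)=0$), there is exactly one root at zero precisely when $a_1^{(t)}\neq 0$, and exactly two precisely when $a_1^{(t)}=0$ but $a_2^{(t)}\neq 0$.

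Next I would extract the recursions for the first two coefficients. Substituting the power series into $f_{t-1}=f_t-n_t f_t' g$ with $g(X)=X-X^2$ and collecting the coefficient of $X^k$ gives
\begin{equation*}
a_k^{(t-1)} = (1-n_t k)\,a_k^{(t)} + n_t (k-1)\,a_{k-1}^{(t)}, \qquad a_0^{(t)}=0.
\end{equation*}
For $k=1$ this is $a_1^{(t-1)}=(1-n_t)\,a_1^{(t)}$, and since $a_1^{(T)}=1$ I obtain $a_1^{(t)}=\prod_{s=t+1}^{T}(1-n_s)$. This product vanishes exactly when one of its factors does, i.e. when $n_s=1$ for some $s>t$. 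That already settles part 2 (if every $n_s\neq 1$ then $a_1^{(t)}\neq 0$, so a single root at zero) and shows that in part 1 the order of vanishing is at least $2$.

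The remaining and genuinely non-routine difficulty is to rule out multiplicity three or higher in part 1, i.e. to show $a_2^{(t)}\neq 0$ whenever $a_1^{(t)}=0$. The $k=2$ recursion, $a_2^{(t-1)}=(1-2n_t)\,a_2^{(t)}+n_t a_1^{(t)}$, is coupled to $a_1^{(t)}$ and does not obviously stay nonzero. The key step I would use is to pass to the auxiliary quantity $c_t:=a_1^{(t)}-a_2^{(t)}$, which decouples: subtracting the two recursions gives
\begin{equation*}
c_{t-1}=(1-n_t)\,a_1^{(t)}-\bigl[(1-2n_t)\,a_2^{(t)}+n_t a_1^{(t)}\bigr]=(1-2n_t)\,c_t,
\end{equation*}
with $c_T=a_1^{(T)}-a_2^{(T)}=1$, hence $c_t=\prod_{s=t+1}^{T}(1-2n_s)$. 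Since each $1-2n_s$ is an odd integer, it is never zero, so $c_t\neq 0$ for all $t$. Therefore, in the case $a_1^{(t)}=0$ we conclude $a_2^{(t)}=-c_t\neq 0$, so the order of vanishing is exactly $2$, which establishes part 1.

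The only clever move is spotting that $a_1-a_2$ telescopes into a product of nonzero factors; everything else is bookkeeping of power-series coefficients, so I expect that decoupling to be the crux of the argument. As a sanity check, in the three-player sequential contest one has $f_2(X)=X^2$ and $f_1(X)=X^2(2X-1)$, each vanishing to order exactly $2$ at the origin, in agreement with the claim since $n_2=n_3=1$.
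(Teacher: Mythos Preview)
Your argument is correct and proceeds along the same lines as the paper: both write $f_t$ as a polynomial, both propagate the low-order coefficients through the recursion, and both obtain $a_1^{(t)}=\prod_{s>t}(1-n_s)$ to pin down when the linear term vanishes. The only substantive difference is in ruling out a triple root. The paper argues directly on the $X^2$-coefficient, claiming that $c_2^{t-1}=0$ forces $c_2^t=0$ and then backtracking to $c_2^{T-1}=n_T\neq 0$; in doing so it tacitly uses $c_1^t=0$ to drop the cross term $n_t c_1^t$, which is not always available at the step where $n_t=1$ first occurs. Your decoupling $c_t:=a_1^{(t)}-a_2^{(t)}$ with $c_{t-1}=(1-2n_t)c_t$ sidesteps that issue entirely and yields the explicit nonzero value $c_t=\prod_{s>t}(1-2n_s)$ in one stroke. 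So the approaches are essentially the same, but your handling of the quadratic coefficient is a bit cleaner and more self-contained than the paper's.
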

	
	\begin{proof}
		As $f_t(X)$ is a polynomial of degree $r=T+1-t$, it can be expressed as
		\begin{equation*}
			f_t(X) = \sum_{s=0}^r c_s^t X^s
			\;\;\; \Rightarrow \;\;\;
			f_t'(X) = \sum_{s=1}^r c_s^t s X^{s-1},
		\end{equation*}
		where $c_0^t,\dots,c_r^t$ are the coefficients. Therefore,
		\begin{equation*}
			f_{t-1}(X)
			=
			c_0^t
			+ c_1^t (1-n_t)
			+ \sum_{s=2}^r \left[
			c_s^t (1-s n_t)
			+ n_t c_{s-1}^t (s-1)
			\right] X^s
			+n_t c_{T+1-t}^t (T+1-t) X^{T+2-t}.
		\end{equation*}
		As $f_T(X)=X$, we have that $c_0^T = 0$ and so $c_0^t=0$ for all $t$. Therefore, each $f_t$ has at least one root at $0$.
		Next, $f_{t-1}(X)$ has two roots at zero if and only if $c_1^{t-1} = c_1^t (1-n_t)=0$. This can happen only if either $c_1^t = 0$ (i.e., $f_t(X)$ has two roots at zero) or $n_t=1$. As $f_T(X)=X$, we have that $c_1^T=1$ and therefore,  $f_t(X)$ does indeed have two roots at zero if and only if $n_s=1$ for some $s>t$.
		
		Finally, $f_{t-1}(X)$ would have three roots at zero only if $c_2^{t-1}=c_1^{t-1}=0=c_0^{t-1}$. This would require that $c_2^{t-1} = c_2^t (1-2 n_t) + n_t c_1^t = c_2^t (1-2 n_t)=0$. Since $2n_t \neq 1$, this can happen only when $c_2^t=0$. But note that $f_{T-1}(X) = n_T X^2 -(1-n_T) X$, so that $c_2^{T-1} = n_T \neq 0$. Therefore, $f_t(X)$ cannot have more than two roots at zero.
	\end{proof}
	
	\begin{lemma} \label{L:leadingcoefficient}
		The leading coefficient of $f_t$ is $(T-t)! \prod_{s=t+1}^T n_s > 0$.
	\end{lemma}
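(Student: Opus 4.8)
The plan is to establish a one-step recursion for the leading coefficient of $f_t$ and then verify the proposed closed form by a downward induction on $t$, running from $t=T$ to $t=0$.

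First I would extract the top-degree behavior from the defining recursion $f_{t-1}(X) = f_t(X) - n_t f_t'(X) g(X)$. As already noted, $f_t$ is a polynomial of degree $r=T+1-t$; write its leading term as $a_t X^{T+1-t}$. Since $g(X)=X(1-X)$ is genuinely quadratic with leading term $-X^2$, and $f_t'(X)$ has leading term $a_t (T+1-t) X^{T-t}$, the product $f_t'(X) g(X)$ has leading term $-a_t (T+1-t) X^{T+2-t}$, which has degree strictly higher than that of $f_t$ itself. Hence the leading term of $f_{t-1}$ is contributed entirely by the summand $-n_t f_t'(X) g(X)$, and reading off its coefficient gives the recursion
\begin{equation*}
a_{t-1} = n_t (T+1-t)\, a_t, \qquad a_T = 1,
\end{equation*}
where the base case follows from $f_T(X)=X$.

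Next I would check that the candidate $a_t = (T-t)! \prod_{s=t+1}^T n_s$ solves this recursion. The base case $t=T$ is immediate: the factorial is $0!=1$ and the product is empty, so $a_T=1$. For the inductive step, assuming the formula at $t$, the quantity $n_t (T+1-t) a_t$ collapses to $(T+1-t)! \prod_{s=t}^T n_s$ using the two identities $(T+1-t)(T-t)! = (T+1-t)!$ and $n_t \prod_{s=t+1}^T n_s = \prod_{s=t}^T n_s$, and this is exactly the formula evaluated at $t-1$. Positivity is then immediate, since every $n_s \geq 1$ and the factorial is strictly positive.

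The only point requiring any care—and it is minor—is confirming that the degree strictly increases at each recursion step, i.e.\ that the leading term of $f_t' g$ does not cancel against anything. This is guaranteed because $g$ is exactly degree two and, by the recursion itself, $a_t$ remains a product of strictly positive factors and is therefore nonzero throughout. I do not expect any genuine obstacle here; the entire argument is bookkeeping of a single coefficient through the recursion.
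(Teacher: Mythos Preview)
Your proposal is correct and follows essentially the same approach as the paper: both extract the one-step recursion $a_{t-1} = n_t (T+1-t)\, a_t$ for the leading coefficient from the defining recursion and then verify the closed form by induction from the base case $a_T=1$. The paper reads this recursion off from the explicit coefficient expansion established in the preceding lemma, while you derive it directly by tracking leading terms, but the argument is the same.
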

	\begin{proof}
		Using the same notation as in \cref{L:ftat0}, the leading coefficient of $f_{t-1}(X)$ is
		$
		c_{r+1}^{t-1}
		= r n_t c_r^t
		= r! \prod_{s=t}^T n_s$.
	\end{proof}
	
	Now I can proceed with the proof of \cref{P:cond1_tullock} itself. The proof uses that fact that the $f_t$ is a polynomial of degree $r=T+1-t$ and keeps track of all of its roots. In particular, it can be expressed as
	\begin{equation}
		f_t(X) = c_t \prod_{s=1}^r (X-X_{s,t}),
	\end{equation}
	where $c_t>0$ is the leading coefficient and $X_{1,t},\dots,X_{r,t}$ are the $r$ roots. By \cref{L:ftat0}, either one or two of these roots are equal to zero. I show by induction that all other roots are distinct and in $(0,1)$.
	
	Let us consider the case of a single zero root first, i.e., assume that $0=X_{1,t}<X_{2,t}<\ldots<X_{r,t}<1$. We can express the derivative of $f_t$ as
	\begin{equation*}
		f_t'(X)
		= c_t \sum_{i=1}^r \prod_{s \neq i} (X-X_{s,t}).
	\end{equation*}
	Therefore, at root $X_{j,t}$, the polynomial $f_t'(X)$ takes value
	\begin{equation}
		f_t'(X_{j,t})
		= c_t \prod_{s \neq j} (X_{j,t} - X_{s,t}).
	\end{equation}
	In particular, at the highest root, $f_t'(X_{r,t})>0$, and at the second highest $f_t'(X_{r-1,t})<0$; therefore, $f_t'$ must have a root $Y_{r-1,t} \in (X_{r-1,t},X_{r,t})$. By the same argument, there must be a root $Y_{s,t}$ of $f_t'$ between each of the two adjacent distinct roots of $f_t$. As $f_t'$ is a polynomial of degree $r-1$, this argument implies that all the roots of $f_t'$ are distinct and such that
	\begin{equation*}
		X_{1,t}=0<Y_{1,t}<X_{2,t}<Y_{2,t}<\dots<X_{r-1,t}<Y_{r-1,t}<X_{r,t}<1.
	\end{equation*}
	In particular, $\sgn f_t'(X_{s,t}) = \sgn f_t(Y_{s,t})$ for all $s \in \{1,\dots,r-1\}$.
	Next, note that $f_t(1)=1>0$ and, as the highest root of $f_t'$ is $Y_{r-1,t}<X_{r,t}$, this implies $f_t'(X_{r,t}) >0$, and so
	\begin{equation*}
		f_{t-1}(X_{r,t})
		= f_t(X_{r,t}) - n_t f_t'(X_{r,t}) X_{r,t} (1-X_{r,t})
		< 0.
	\end{equation*}
	Therefore, $f_{t-1}$ must have a root $X_{r+1,t-1} \in (X_{r,t},1)$. Now, for each $s \in \{2,r-1\}$
	\begin{equation*}
		f_{t-1}(Y_{s,t}) = f_t(Y_{s,t})
		\text{ and }
		f_{t-1}(X_{s,t}) = -n_t f_t'(X_{s,t}) X_{s,t} (1-X_{s,t}).
	\end{equation*}
	Hence, $\sgn f_{t-1}(Y_{s,t}) = \sgn f_t(Y_{s,t}) = \sgn f_t'(X_{s,t}) = -f_{t-1}(X_{s,t})$. This means that $f_{t-1}$ must have a root $X_{s+1,t-1} \in (X_{s,t},Y_{s,t})$. This argument determines $r-2$ distinct roots in $(X_{2,t},Y_{r-1,t})$. By \cref{L:ftat0}, $f_{t-1}$ also has at least one root $X_{1,t-1}=0$.
	
	We have therefore found $1+r-2-1=r$ distinct real roots of $f_{t-1}$ that is a polynomial of degree $r+1$. Thus, the final root $X_{2,t}$ must also be real. By \cref{L:ftat0}, if $n_t=1$,then the $f_{t-1}$ must have two roots at zero; so, $X_{2,t}=0$. Let us consider the remaining case where $n_t > 1$. By \cref{L:ftat0}, $X_{2,t} \neq 0$. To determine its location, consider the function $f_{t-1}^X(X) = f_{t-1}(X)/X$. Note that
	\begin{equation*}
		f_t^X(X) = \frac{f_t(X)}{X} = c_t \prod_{s>0} (X-X_{s,t})
		\Rightarrow
		f_t^X(0) = c_t \prod_{s>0} (-X_{s,t})
	\end{equation*}
	and
	\begin{equation*}
		f_t'(0) = c_t \prod_{s>0} (-X_{s,t}).
	\end{equation*}
	Therefore,
	\begin{equation*}
		f_{t-1}^X(0)
		= f_t^X(0)-n_t f_t'(0) (1-0)
		= c_t \prod_{s>0} (-X_{s,t}) \left[
		1 - n_t
		\right]
		= f_t'(0) \left[
		1 - n_t
		\right].
	\end{equation*}
	We assumed that $n_t>1$; so, $\sgn f_{t-1}^X(0) = - \sgn f_t'(0)$. Evaluating the function $\sgn f_{t-1}^X$ at $Y_{1,t}$ gives
	\begin{equation*}
		\sgn f_{t-1}^X(Y_{1,t})
		= \sgn f_t(Y_{1,t})
		= \sgn f_t'(X_{1,t})
		= - \sgn f_{t-1}^X(0).
	\end{equation*}
	Hence, $f_{t-1}^X$ must have a root $X_{2,t-1} \in (0,Y_{1,t})$. As $f_{t-1}(X)=X f_{t-1}^X(X)$, it must be a root of $f_{t-1}$ as well. We have therefore located all $r+1$ roots of $f_{t-1}$, which are all distinct in this case.
	
	Let us now get back to the case where $f_t$ had two roots at zero. By the same argument as above, there must be a root of $f_t'$ between each positive root of $f_t$. As there are $r-2$ positive roots, this determines $r-3$ distinct positive roots of $f_t'$. It is also clear that $f_t'$ must have exactly one root at zero. Polynomial $f_t'$ has $r-1$ roots, and we have determined that $r-2$ of them are real and distinct. Thus, the remaining root must be real. To determine its location, using the above approach, let $f_t^{X'}(X) = \frac{f_t'(X)}{X}$. Then as $f_t'(X_{r,t})>0$, we have $f_t^{X'}(X_{r,t})>0$. Similarly, $f_t^{X'}(X_{r-1,t})<0$, and so on. In particular, $f_t^{X'}(X_{3,t}) < 0$ if $r$ is even, and $f_t^{X'}(X_{3,t})>0$ if $r$ is odd. Now,
	\begin{equation*}
		f_t^{X'}(0) = 2 c_t \prod_{s>2} (-X_{s,t}),
	\end{equation*}
	which is strictly positive if $r$ is odd and strictly negative if $r$ is even, so that $\sgn f_t^{X'}(0) = - \sgn f_t^{X'}(X_{3,t})$. Hence, $f_t^{X'}$ must have a root $Y_{2,t} \in (0,X_{3,t})$. Clearly this $Y_{2,t}$ is also a root of $f_t'(X) = X f_t^{X'}(X)$. Now we have found all $r-1$ roots of polynomial $f_t'$ and
	\begin{equation*}
		X_{1,t}=Y_{1,t} = X_{2,t}=0<Y_{2,t} <X_{3,t}<\ldots < X_{r-1,t} < Y_{r,t} < X_{r,t}.
	\end{equation*}
	Again, $\sgn f_t'(X_{s,t})=\sgn f_t(Y_{s,t})$ for all $s \in \{2,\dots,r-1\}$.
	
	By the same arguments as above, $f_{t-1}$ has a root $X_{r+1,t-1} \in (X_{r,t},1)$ and $r-3$ roots $X_{s+1,t-1} \in (X_{s,t},Y_{s,t})$ for each $s \in \{3,r-1\}$. Also, by \cref{L:ftat0}, $f_{t-1}$ must have two roots at zero. Therefore, we have determined $1+r-3+2=r$ roots of $f_{t-1}$, and so the final root must also be real. The argument for determining this root is similar to the previous case. Let $f_{t-1}^{X^2}(X) = \frac{f_{t-1}(X)}{X^2}$. Then
	\begin{equation*}
		f_{t-1}^{X^2}(X)
		=
		f_t^{X^2}(X)
		-
		n_t f_t^{X'}(X) (1-X).
	\end{equation*}
	Therefore,
	\begin{equation*}
		f_{t-1}^{X^2}(0)
		= c_t \prod_{s>2} (-X_{s,t}) (1- 2 n_t),
	\end{equation*}
	so that $\sgn f_{t-1}^{X^2}(0) = - \sgn f_t^{X'}(0)$. Also,
	\begin{equation*}
		f_{t-1}^{X^2}(Y_{2,t})
		= f_t^{X^2}(Y_{2,t}).
	\end{equation*}
	Since $Y_{2,t} >0$ and $X_{3,t}>0=X_{2,t}$, we have that
	\begin{align*}
		\sgn f_{t-1}^{X^2}(Y_{2,t})
		&= \sgn f_t^{X^2}(Y_{2,t})
		= \sgn f_t(Y_{2,t})
		= -\sgn f_t(Y_{3,t}) \\
		&= -\sgn f_t'(X_{3,t})
		= -\sgn f_t^{X'}(X_{3,t})
		= \sgn f_t^{X'}(0)
		= -\sgn f_{t-1}^{X^2}(0).
	\end{align*}
	Therefore, $f_{t-1}^{X^2}$ must have a root in $(0,Y_{2,t})$ which must also be a root of $f_{t-1}$. Again, we have found all $r+1$ roots of $f_{t-1}$.
	
	In all cases, we found that
	\begin{enumerate}
		\item $X_{r+1,t-1}  \in (X_{r,t},1)$; i.e., indeed the highest root of $f_{t-1}$ is between the highest root of $f_t$ and $1$.
		\item $[X_{r,t},X_{r+1,t-1}) \subset (X_{r,t-1},X_{r+1,t-1})$, so that $f_{t-1}(X)<0$ for all $X \in [X_{r,t},X_{r+1,t-1})$.
		\item By the same argument as above (or by the Gauss-Lucas theorem), $X_{r+1,t-1} > Y_{r,t-1}$, so that $f_{t-1}'(X) > 0$ for all $X \in [X_{r+1,t-1},1]$.
	\end{enumerate}
\end{proof}

\subsection{Proof that \Cref{C:cond1} Implies \Cref{T:characterization}} \label{A:cond1_characterization}

\begin{lemma} \label{L:characterization_01} Depending on $X_{t-1}$, we have two cases:
  \begin{enumerate}
      \item If $X_{t-1}<1$, then $x_i > 0$ for all $i \in \prt_t$ and $X_{t-1} <X_t<1$.
      \item  If $X_{t-1} \geq 1$, then $x_i =0$ for all $i \in \prt_t$ and $X_t=X_{t-1} \geq 1$.
  \end{enumerate}
\end{lemma}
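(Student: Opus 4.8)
The plan is to analyze the simultaneous game played in period $t$ in isolation, using the continuation map $\Xs{t}$ whose properties are supplied by the induction hypothesis: $\Xs{t}$ is strictly increasing on $[0,1]$ with $\Xs{t}(1)=1$, hence $\Xs{t}(X_t)<1$ for every $X_t<1$; and since cumulative effort cannot fall across periods, $\Xs{t}(X_t)\geq X_t$ always, so $\Xs{t}(X_t)>1$ whenever $X_t>1$. The only other fact I would use is that $h$ is strictly decreasing with $h(1)=0$, so that $h(X)>0$ for $X<1$ and $h(X)<0$ for $X>1$. Writing $X_t=X_{t-1}+\sum_{j\in\prt_t}x_j$, each player $i\in\prt_t$ earns $x_i\,h(\Xs{t}(X_t))$, and I would establish both cases by exhibiting a profitable deviation whenever the claimed conclusion fails in a Nash equilibrium of this game.

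The case $X_{t-1}\geq 1$ I would dispatch first, since it is a dominance argument. For any profile $X_t\geq X_{t-1}\geq 1$, so $\Xs{t}(X_t)\geq 1$ and $h(\Xs{t}(X_t))\leq 0$, making $x_i=0$ (payoff $0$) a weak best response. Fixing the rivals, the residual cumulative effort $X_t-x_i\geq X_{t-1}\geq 1$, so any $x_i>0$ forces $X_t>1$ and a strictly negative payoff; hence $x_i=0$ is the unique best response of every player and $X_t=X_{t-1}$.

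For the main case $X_{t-1}<1$ I would argue in two steps. First I would rule out $X_t\geq 1$: if $X_t>1$ then some player has $x_i>0$ (as $X_t>X_{t-1}$) and earns $x_i\,h(\Xs{t}(X_t))<0$, so deviating to $x_i=0$ is profitable; if $X_t=1$ then $\Xs{t}(1)=1$ makes all payoffs $0$, and a player with $x_i>0$ who lowers effort to $x_i-\varepsilon$ for small $\varepsilon\in(0,x_i)$ obtains $X_t=1-\varepsilon<1$, hence $\Xs{t}(1-\varepsilon)<1$ and the strictly positive payoff $(x_i-\varepsilon)\,h(\Xs{t}(1-\varepsilon))>0$. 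Either way we reach a contradiction, so $X_t<1$. Second, given $X_t<1$, if some $x_j=0$ then for small $\varepsilon>0$ the deviation to $x_j=\varepsilon$ still has $X_t+\varepsilon<1$, so $\Xs{t}(X_t+\varepsilon)<1$ and the payoff rises from $0$ to $\varepsilon\,h(\Xs{t}(X_t+\varepsilon))>0$, a contradiction. Thus every $x_i>0$, and in particular $X_t=X_{t-1}+\sum_j x_j>X_{t-1}$, which delivers $X_{t-1}<X_t<1$.

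The hardest part will be the boundary at $X_t=1$: there the objective is flat at $0$, so I cannot rule out $X_t=1$ by a level comparison and must instead use a one-sided, effort-reducing marginal deviation, which relies critically on $\Xs{t}$ lying strictly below $1$ immediately to the left of $1$ (from the induction hypothesis) together with $h$ changing sign exactly at $1$. The remaining work is bookkeeping: checking feasibility ($x_i\geq 0$) of every deviation, confirming that the arguments fed into $\Xs{t}$ stay in $[0,1]$ where the induction hypothesis applies, and phrasing all sign statements for the general strictly decreasing $h$ with threshold normalized to $1$, rather than only for the Tullock form.
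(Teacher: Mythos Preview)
Your proof is correct and follows essentially the same approach as the paper: a dominance argument for $X_{t-1}\geq 1$, then ruling out $X_t\geq 1$ by splitting into the strict and equality cases, and finally showing interiority by a small upward deviation. The only cosmetic difference is that the paper phrases the $X_{t-1}<1$ step as locating the first period $s\geq t$ at which cumulative effort crosses $1$ and deriving a contradiction there, whereas you argue directly at period $t$ using the induction hypothesis on $\Xs{t}$; since that hypothesis already guarantees $\Xs{t}(X_t)<1$ whenever $X_t<1$, the crossing period must in fact be $s=t$, so the two arguments coincide.
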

In other words, if period $t$ starts with cumulative effort $X_{t-1}<1$, the players exert strictly positive efforts, but the cumulative effort stays below $1$. On the other hand, if the cumulative effort is already $X_{t-1} \geq 1$, then all players choose zero effort and therefore $X_t = X_{t-1} \geq 1$. A straightforward implication of this lemma is that the total effort never reaches $1$ or above in equilibrium, and the individual efforts on the equilibrium path are always interior (i.e., strictly positive).
\begin{proof}
  If $X_{t-1} \geq 1$, then if any player $i$ in period $t$ chooses $x_i >0$, then $X_t > 1$ and therefore $X \geq X_t > 1$, which means that $u_i(\bx) < 0$. Since player $i$ can ensure zero payoff by choosing $x_i=0$, this is a contradiction. So, $x_i^*(X_{t-1})=0$ for all $X_{t-1} \geq 1$ and thus $X_t = X_{t-1} \geq 1$.

  Now, take $X_{t-1} < 1$. Suppose by contradiction that it leads to $X \geq 1$. This implies that in some period $s \geq t$ players chose efforts such that $X_{s-1} < 1$, but $X_s \geq 1$. This means that at least one player $i$ in period $s$ chose $x_i>0$ and gets a payoff of $u_i(\bx) \leq 0$. Now, there are two cases. First, if the induced total effort $X>1$, then player $i$'s payoff is strictly negative, and the player could deviate and choose $x_i=0$ to ensure zero payoff. On the other hand, if $X=1$, which means that $X_s=1$, then player $i$ could choose effort $\frac{x_i}{2}$, thus making $X_s<1$ and therefore $X < 1$, ensuring a strictly positive payoff. In both cases we arrive at a contradiction. Thus $X_{t-1} <1$ implies $X_t < 1$ and $X<1$.

  The last step is to show that $X_{t-1}<1$ implies $x_i>0$ for all $i \in \prt_t$. Suppose that this is not true, so that $x_i=0$ for some $i$. Then player $i$ gets a payoff of $0$. But by choosing $\widehat{x}_i \in (0,1-X_t)$, he can ensure that the cumulative effort $\widehat{X}_t = X_t + \widehat{x}_i < 1$ and thus the induced total effort $\widehat{X}<1$, and the new payoff of player $i$ is strictly positive. This is a contradiction.
\end{proof}

\begin{lemma} \label{L:characterization_necc}
  $X_t = f_t(X)$ is a necessary condition for equilibrium.
\end{lemma}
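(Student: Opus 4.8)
The plan is to exploit the induction hypothesis already set up in the main proof, namely that if the cumulative effort after period $t$ is $X_t$ and players in periods $t+1,\dots,T$ follow their equilibrium strategies, then the induced total effort equals $\Xs{t}(X_t)=f_t^{-1}(X_t)$. Fix a history reaching period $t$ with cumulative effort $X_{t-1}<1$; the complementary case $X_{t-1}\geq 1$ is already disposed of by \cref{L:characterization_01}, where every period-$t$ effort is zero. A player $i\in\prt_t$ who raises $x_i$ by one unit raises $X_t$ one-for-one, and hence moves the induced total effort through the inverse map $f_t^{-1}$. I would therefore rewrite player $i$'s payoff as a function of $x_i$ alone, holding the other period-$t$ efforts fixed, and read off its first-order condition.

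Concretely, writing $X=f_t^{-1}\bigl(X_{t-1}+\sum_{j\in\prt_t}x_j\bigr)$, player $i$'s payoff is $u_i=x_i\,h(X)$. Differentiating in $x_i$ and using $\frac{dX}{dx_i}=\frac{1}{f_t'(X)}$ — legitimate because, by \cref{C:cond1}, $f_t$ is strictly increasing and hence differentiably invertible on $[\uX_t,1]$ — the optimality condition reads
\[
h(X)+x_i\,h'(X)\,\frac{1}{f_t'(X)}=0
\quad\Longrightarrow\quad
x_i=f_t'(X)\left(-\frac{h(X)}{h'(X)}\right)=f_t'(X)\,g(X),
\]
where I use the definition $g(X)=-\frac{h(X)}{h'(X)}$. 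Since \cref{L:characterization_01} guarantees that every period-$t$ effort is strictly positive when $X_{t-1}<1$, the optimum is interior and this first-order condition is indeed necessary.

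The aggregation step then closes the argument, and it also forces symmetry rather than assuming it: because the right-hand side $f_t'(X)\,g(X)$ is identical for every $i\in\prt_t$, all $n_t$ players must choose the same effort, so the total effort added in period $t$ is $X_t-X_{t-1}=n_t f_t'(X)\,g(X)$. Combining this with $X_t=f_t(X)$ from the induction hypothesis yields
\[
X_{t-1}=f_t(X)-n_t f_t'(X)\,g(X)=f_{t-1}(X),
\]
which is exactly the recursive definition \eqref{E:recdef}.

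The main obstacle is not the computation but justifying that this single first-order condition is genuinely the necessary condition for equilibrium. Two points require care: first, that the optimum is interior so that the constraint $x_i\geq 0$ is slack and the derivative must vanish, which is precisely the content of \cref{L:characterization_01}; and second, that the chain-rule step is valid, i.e.\ that $f_t^{-1}$ is well-defined and differentiable on the relevant range $[\uX_t,1]$, which is guaranteed by parts of \cref{C:cond1}. Sufficiency — verifying that the resulting candidate is actually a maximizer — is deliberately deferred to \cref{L:characterization_SOC}, so here I only need the first-order necessity, which the above delivers.
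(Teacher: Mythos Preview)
Your proposal is correct and follows essentially the same approach as the paper: you invoke \cref{L:characterization_01} to restrict to interior optima with $X_{t-1}<1$, differentiate $x_i\,h(X)$ through the induction hypothesis $X=f_t^{-1}(X_t)$ via the chain rule to obtain $x_i=f_t'(X)g(X)$, and then sum over $i\in\prt_t$ to recover $X_{t-1}=f_{t-1}(X)$. The only difference is cosmetic---you are slightly more explicit about why symmetry is forced and why the chain rule is legitimate---but the argument is the same.
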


\begin{proof}
By \cref{L:characterization_01}, we only need to consider the histories with $X_{t-1}<1$. Moreover, we know that each player $i \in \prt_t$ chooses $x_i > 0$, i.e., an interior solution. Player $i$'s maximization problem is
\begin{equation*}
  \max_{x_i \geq 0}
  \frac{x_i}{f_t^{-1}(X_t)} - x_t
\end{equation*}
where $X_t = X_{t-1}+\sum_{j \in \prt_t} x_j$. Therefore, a necessary condition for optimum is
\begin{equation*}
  \frac{1}{f_t^{-1}(X_t)} - 1 + \frac{-x_i}{[f_t^{-1}(X_t)]^2} \dd{f_t^{-1}(X_t)}{X_t} = 0.
\end{equation*}
It is convenient to rewrite this condition in terms of the total effort $X$, taking into account that $X = f_t^{-1}(X_t)$ and $\dd{f_t^{-1}(X_t)}{X_t} = \frac{1}{f_t'(X)}$ to get
\begin{equation} \label{E:indFOC}
  x_i = f_t'(X) X(1-X)
  .
\end{equation}
Now, we can add up these necessary conditions for all players $i \in \prt_t$ and take into account that $f_t(X) = X_t = X_{t-1} + \sum_{i \in \prt_t} x_i$ to get a necessary condition for the equilibrium, $X_{t-1}=f_{t-1}(X)$, defined by \cref{E:ftrecdef}.
\end{proof}

\begin{lemma} \label{L:characterization_welldefined}
  Under \cref{C:cond1},
  $f_{t-1}^{-1}(X_{t-1})$ is well-defined, strictly increasing, and satisfies $f_{t-1}^{-1}(0)=\uX_{t-1}$ and $f_{t-1}^{-1}(1)=1$.
\end{lemma}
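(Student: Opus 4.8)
The plan is to identify $\Xs{t-1}(X_{t-1})$ with the inverse of $f_{t-1}$ on the branch where \cref{C:cond1} guarantees it is well-behaved, and then read off the claimed monotonicity and boundary values from that identification. The starting point is the necessary condition $X_{t-1}=f_{t-1}(X)$ established in \cref{L:characterization_necc}: for any cumulative effort $X_{t-1}$ that is followed by optimal play in periods $t,\dots,T$ yielding total effort $X$, the pair $(X_{t-1},X)$ must lie on the graph of $f_{t-1}$. Inverting this relation on the correct branch is exactly what delivers $\Xs{t-1}$.

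First I would record that $f_{t-1}$, restricted to $[\uX_{t-1},1]$, is a strictly increasing continuous bijection onto $[0,1]$. Strict monotonicity on this interval is property 3 of \cref{C:cond1}; by definition $\uX_{t-1}$ is the highest root of $f_{t-1}$, so $f_{t-1}(\uX_{t-1})=0$, and by \cref{L:ftat1} we have $f_{t-1}(1)=1$; continuity is immediate since $f_{t-1}$ is a polynomial. Consequently the inverse $f_{t-1}^{-1}\colon[0,1]\to[\uX_{t-1},1]$ exists, is continuous and strictly increasing, and satisfies $f_{t-1}^{-1}(0)=\uX_{t-1}$ and $f_{t-1}^{-1}(1)=1$. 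These already give the monotonicity and both boundary values asserted in the lemma, provided the identification $\Xs{t-1}=f_{t-1}^{-1}$ holds.

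The substantive step is therefore to show that $\Xs{t-1}(X_{t-1})=f_{t-1}^{-1}(X_{t-1})$, i.e.\ that the total effort $X$ induced by $X_{t-1}$ is the root of $X_{t-1}=f_{t-1}(X)$ lying on the increasing branch $[\uX_{t-1},1]$ rather than on any lower branch. Here I would combine the induction hypothesis with property 2 of \cref{C:cond1}. By the induction hypothesis the continuation map $\Xs{t}(X_t)=f_t^{-1}(X_t)$ takes values in $[\uX_t,1]$, so whatever cumulative effort $X_t$ results after period $t$, the induced total effort satisfies $X=\Xs{t}(X_t)\ge \uX_t$. On the interval $[\uX_t,\uX_{t-1})$, however, property 2 of \cref{C:cond1} gives $f_{t-1}(X)<0$, which cannot equal the nonnegative cumulative effort $X_{t-1}$. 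Hence the relevant root must satisfy $X\ge\uX_{t-1}$, and on $[\uX_{t-1},1]$ the necessary condition $X_{t-1}=f_{t-1}(X)$ has the unique solution $X=f_{t-1}^{-1}(X_{t-1})$. This pins down $\Xs{t-1}(X_{t-1})=f_{t-1}^{-1}(X_{t-1})$ for every $X_{t-1}\in[0,1)$, where \cref{L:characterization_necc} applies because \cref{L:characterization_01} guarantees interior efforts; the endpoint value $\Xs{t-1}(1)=1=f_{t-1}^{-1}(1)$ follows directly from \cref{L:characterization_01}, completing the induction step.

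I expect the main obstacle to be this branch-selection argument rather than the inversion itself: one must verify that the $X$ consistent with optimal continuation play necessarily lands in $[\uX_{t-1},1]$. This is precisely where properties 1--2 of \cref{C:cond1} do the work, ruling out the spurious roots of $f_{t-1}$ below $\uX_{t-1}$ by exploiting that cumulative effort is nonnegative while the induced total effort never drops below $\uX_t$. The remaining claims (continuity, strict monotonicity, and the two boundary values) are then routine consequences of inverting a strictly increasing polynomial on an interval.
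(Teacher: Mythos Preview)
Your proposal is correct and follows essentially the same approach as the paper: both arguments use the induction hypothesis to confine the induced total effort to $[\uX_t,1]$, then invoke property~2 of \cref{C:cond1} to rule out the interval $[\uX_t,\uX_{t-1})$ (since $f_{t-1}<0$ there cannot match a nonnegative $X_{t-1}$), and finally invert $f_{t-1}$ on $[\uX_{t-1},1]$ using property~3 to read off monotonicity and the boundary values. Your treatment is slightly more explicit about the endpoint $X_{t-1}=1$ and the role of \cref{L:characterization_necc}, but the substance is identical.
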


\begin{proof}
  First, note that even if $X_t$ would be $0$, the total effort induced by it would not be zero. In fact, by recursion it is straigtforward to show that it would be $\uX_t$. For any $X_{t-1}$ therefore $f_{t-1}^{-1}(X_{t-1}) \geq \uX_t$. Consequently, $X < \uX_t$ cannot be the total effort following any $X_{t-1}$.

  Moreover, by \cref{C:cond1}, $\uX_{t-1} \geq \uX_t$ and $f_{t-1}(X) <0$ for all $X \in [\uX_t,\uX_{t-1})$; therefore, total efforts in $[\uX_t,\uX_{t-1})$ range are not consistent with any $X_{t-1}$ either. We get that the only feasible range of the total effort $X$ induced by cumulative effort $X_{t-1}$ is $[\uX_{t-1},1]$. By \cref{C:cond1}, the function $f_{t-1}$ is continuously differentiable and strictly increasing in this range; therefore, the inverse is well-defined, continuously differentiable, and strictly increasing. Moreover, since $f_{t-1}(1)=1$, we have $f_{t-1}^{-1}(1)=1$, and since $\uX_{t-1}$ is a root of $f_{t-1}$, we have $f_{t-1}^{-1}(0)=\uX_{t-1}$.
\end{proof}

\begin{lemma} \label{L:characterization_brs}
  The best-response function of player $i \in \prt_t$ after cumulative effort $X_{t-1}$ is
  \begin{equation}
    x_i^*(X_{t-1}) =
    \begin{cases}
      \frac{1}{n_t} \left[ f_t(f_{t-1}^{-1}(X_{t-1})) - X_{t-1} \right]
        & \forall X_{t-1}<1,\\
      0
        & \forall X_{t-1}\geq 1.
    \end{cases}
  \end{equation}
  On the equilibrium path the individual efforts are $x_i^* = \frac{1}{n_t} \left[ f_t(X^*) - f_{t-1}(X^*) \right]$.
\end{lemma}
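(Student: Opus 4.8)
The plan is to build directly on the first-order condition \eqref{E:indFOC} derived in \cref{L:characterization_necc} and the invertibility established in \cref{L:characterization_welldefined}, treating the two regimes $X_{t-1}\geq 1$ and $X_{t-1}<1$ separately. The case $X_{t-1}\geq 1$ is immediate: \cref{L:characterization_01} asserts that every player in period $t$ exerts zero effort, so $x_i^*(X_{t-1})=0$ with no further work.

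For the substantive case $X_{t-1}<1$, I would first invoke \cref{L:characterization_01} to conclude that each player $i\in\prt_t$ chooses a strictly positive (interior) effort, so the necessary condition \eqref{E:indFOC}, namely $x_i = f_t'(X)g(X)$, holds for every such player, where $X$ denotes the total effort induced by the profile. The crucial observation is that the right-hand side depends only on the common induced total effort $X$ and not on the identity $i$; hence all players in $\prt_t$ necessarily select the same effort. This forced symmetry is what makes the subsequent aggregation collapse to a single equation.

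Next I would sum the $n_t$ identical first-order conditions, use the accounting identity $X_t = X_{t-1}+\sum_{j\in\prt_t}x_j$, and apply the induction hypothesis $\Xs{t}(X_t)=f_t^{-1}(X_t)$, equivalently $X_t=f_t(X)$. This yields $f_t(X)=X_{t-1}+n_t f_t'(X)g(X)$, which by the recursive definition \eqref{E:recdef} is exactly $X_{t-1}=f_{t-1}(X)$. Since \cref{L:characterization_welldefined} guarantees that $f_{t-1}$ is strictly increasing and invertible on the relevant range, the induced total effort is pinned down uniquely as $X=f_{t-1}^{-1}(X_{t-1})$. Substituting back through $x_i^*=(X_t-X_{t-1})/n_t$ with $X_t=f_t(X)$ gives $x_i^*(X_{t-1})=\frac{1}{n_t}\left[f_t(f_{t-1}^{-1}(X_{t-1}))-X_{t-1}\right]$, as claimed. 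The equilibrium-path expression then follows by evaluating at the equilibrium total effort $X^*$: there $X_{t-1}=f_{t-1}(X^*)$, so $f_{t-1}^{-1}(X_{t-1})=X^*$ and $x_i^*=\frac{1}{n_t}\left[f_t(X^*)-f_{t-1}(X^*)\right]$.

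The main obstacle is the symmetry step, i.e.\ arguing that the interior first-order condition forces all period-$t$ players onto a common effort; once this is secured, everything else is bookkeeping that chains the identities $X_t=f_t(X)$ and $X_{t-1}=f_{t-1}(X)$ together with the invertibility of $f_{t-1}$. I would emphasize that this lemma only isolates the \emph{unique candidate} best response via necessary conditions; verifying that this candidate is genuinely optimal (second-order sufficiency) is deferred to \cref{L:characterization_SOC}.
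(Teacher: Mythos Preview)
Your proposal is correct and follows essentially the same route as the paper: handle $X_{t-1}\geq 1$ via \cref{L:characterization_01}, and for $X_{t-1}<1$ combine the interior first-order condition \eqref{E:indFOC} with the identification $X=f_{t-1}^{-1}(X_{t-1})$ from \cref{L:characterization_welldefined}. The only difference is that you re-derive the aggregation (summing the $n_t$ identical conditions to reach $X_{t-1}=f_{t-1}(X)$), whereas the paper uses the recursive definition $n_t f_t'(X)g(X)=f_t(X)-f_{t-1}(X)$ directly to read off $x_i^*=\frac{1}{n_t}[f_t(X)-f_{t-1}(X)]$ without repeating the summation already done in \cref{L:characterization_necc}; this is a matter of exposition, not substance.
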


\begin{proof}
  \Cref{L:characterization_01} proved the claim for any $X_{t-1} \geq 1$. Take $X_{t-1}<1$. Then by \cref{L:characterization_01}, the individual efforts are interior, so they have to satisfy the individual first-order conditions \eqref{E:indFOC}.
  I showed that the total effort induced by $X_{t-1}$ must be $f_{t-1}^{-1}(X_{t-1})$. Inserting these results into the individual optimality condition for player $i \in \prt_t$ I get
  \begin{equation*}
  	x_i^*(X_{t-1})
  	= \frac{1}{n_t} \left[f_t(f_{t-1}^{-1}(X_{t-1}))-X_{t-1}\right].
  \end{equation*}
  In particular, on the equilibrium path, $X=X^*$, and therefore $x_i^* = \frac{1}{n_t} \left[f_t(X^*)-f_{t-1}(X^*)\right]$.
\end{proof}

So far, the arguments show that necessary conditions for equilibria lead to a unique candidate for equilibrium---the strategies specified in the theorem. Finally, we have to check that this is indeed an equilibrium. That is, we need to show that all players are indeed maximizing their payoffs.

\begin{lemma} \label{L:characterization_SOC}
$\bx^*$ is an equilibrium.
\end{lemma}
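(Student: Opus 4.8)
The plan is to invoke the one-shot deviation principle: since \cref{L:characterization_01,L:characterization_necc,L:characterization_brs} already pin down $\bx^*$ as the unique candidate and handle every history with $X_{t-1}\geq 1$, it suffices to fix an arbitrary history with $X_{t-1}<1$, let every other period-$t$ player and all continuation players follow their prescribed strategies, and show that player $i\in\prt_t$ cannot profitably deviate. The clean way to do this is to reparametrize player $i$'s choice by the \emph{induced total effort} $X$. Writing $y = X_{t-1}+\sum_{j\in\prt_t,\,j\neq i}x_j^*$ for the (fixed) cumulative effort of everyone else in period $t$, choosing $x_i$ with $X_{t-1}+y+x_i\le 1$ is equivalent to choosing $X=f_t^{-1}(y+x_i)\in[X_0,1]$ with $X_0=f_t^{-1}(y)$ and $x_i=f_t(X)-y$, so player $i$ maximizes
\[
\phi_i(X) = \big(f_t(X)-y\big)\,h(X).
\]

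The main computation is to differentiate and simplify. Using $g=-h/h'$ I would obtain
\[
\phi_i'(X) = f_t'(X)h(X) + \big(f_t(X)-y\big)h'(X) = h'(X)\big[f_t(X)-f_t'(X)g(X)-y\big],
\]
and then rewrite the bracket using the recursion \eqref{E:recdef}: since $f_t'(X)g(X)=\frac{1}{n_t}\big(f_t(X)-f_{t-1}(X)\big)$, I set $\Psi_t(X):=f_t(X)-f_t'(X)g(X)=\frac{n_t-1}{n_t}f_t(X)+\frac{1}{n_t}f_{t-1}(X)$. Writing $\widetilde X=f_{t-1}^{-1}(X_{t-1})$ for the equilibrium-induced total effort (so $f_{t-1}(\widetilde X)=X_{t-1}$), a short check that the equilibrium efforts of the other players give $y=\Psi_t(\widetilde X)$ yields the key identity
\[
\phi_i'(X) = h'(X)\big[\Psi_t(X)-\Psi_t(\widetilde X)\big].
\]
Because $\Psi_t$ is a positive combination of $f_t$ and $f_{t-1}$, and both are strictly increasing on $[\uX_{t-1},1]$ by \cref{C:cond1} (using $\uX_{t-1}\geq\uX_t$), $\Psi_t$ is strictly increasing there; together with $h'<0$ this makes $\phi_i$ strictly increasing on $[\uX_{t-1},\widetilde X]$ and strictly decreasing on $[\widetilde X,1]$, so $\widetilde X$ (i.e.\ $x_i=x_i^*$) is the unique local maximum, matching \eqref{E:indFOC}.

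The remaining, and genuinely delicate, step is to rule out deviations that leave this ``nice'' range, which I expect to be the main obstacle. Two regions must be controlled. For deviations inducing $X\in[X_0,\uX_{t-1})$ — which is nonempty precisely when $X_{t-1}$ is small, since $X_0$ can fall below $\uX_{t-1}$ — monotonicity of $\Psi_t$ is not available; here I would lean on part 2 of \cref{C:cond1}, namely $f_{t-1}(X)<0$ on $[\uX_t,\uX_{t-1})$. Combined with $f_t$ increasing, this gives $\Psi_t(X)<\frac{n_t-1}{n_t}f_t(\uX_{t-1})=\Psi_t(\uX_{t-1})\le\Psi_t(\widetilde X)$, so $\phi_i'>0$ throughout and $\phi_i$ is in fact strictly increasing on all of $[X_0,\widetilde X]$. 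For deviations with $x_i$ so large that $y+x_i>1$, \cref{L:characterization_01} implies all continuation players choose zero, hence $X=y+x_i>1$, $h(X)\le 0$, and the payoff is nonpositive, strictly below $\phi_i(\widetilde X)=x_i^*h(X^*)>0$. Since the argument uses only $X_{t-1}<1$ and the well-definedness of $\widetilde X=f_{t-1}^{-1}(X_{t-1})$ from \cref{L:characterization_welldefined}, it applies at every subgame, establishing that $\bx^*$ is a subgame-perfect equilibrium.
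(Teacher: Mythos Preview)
Your argument is correct and follows the same one-shot-deviation strategy as the paper, but it is considerably more detailed. The paper's own proof is a three-line sketch: it simply asserts that the local extremum is unique, observes that it yields strictly positive payoff while the corner $x_i=0$ yields zero, and concludes global optimality from that. Your reparametrization by the induced total effort $X$, the identity $\phi_i'(X)=h'(X)\big[\Psi_t(X)-\Psi_t(\widetilde X)\big]$ with $\Psi_t=\tfrac{n_t-1}{n_t}f_t+\tfrac{1}{n_t}f_{t-1}$, and the explicit use of parts 2 and 3 of \cref{C:cond1} to establish single-peakedness on all of $[X_0,1)$ actually \emph{prove} the uniqueness claim the paper leaves implicit; in particular, your treatment of the region $[X_0,\uX_{t-1})$ via $f_{t-1}<0$ there is exactly the ingredient that justifies the paper's ``the local extremum is unique.'' So the two routes agree in spirit, with yours supplying the verification the paper omits.

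Two minor slips worth fixing: (i) ``$X_{t-1}+y+x_i\le 1$'' should read ``$y+x_i<1$,'' since $y$ already includes $X_{t-1}$; (ii) in the last line, $\phi_i(\widetilde X)=x_i^*\,h(\widetilde X)$ rather than $x_i^*\,h(X^*)$, as $\widetilde X$ coincides with $X^*$ only on the equilibrium path.
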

\begin{proof}
By construction $x_i^*(X_{t-1})$ is a local extremum for player $i \in \prt_t$, given that the cumulative effort prior to period t is $X_{t-1}$ and all other players behave according to their equilibrium strategies.
Since the local extremum is unique and ensures strictly positive payoff (which is strictly more than zero from corner solution $x_i=0$), $x_i^*(X_{t-1})$ is also the global maximum. Thus, no player has an incentive to deviate.
\end{proof}

\subsection{Proof of the Information Theorem (\Cref{T:information})}

For any contest $\bn=(n_1,\dots,n_T)$ and any period $t \leq T$, let $\bn^t = (n_{t+1},\dots,n_T)$ denote the sub-contest starting after period $t$. In particular, $\bn^T = \varnothing$ and $\bn^0=\bn$.
Note that  $f_t(X)$ depends only on $\bn^t$.

Remember that $g_1,\dots,g_T$ are recursively defined as $g_1(X) = X(1-X)$ and $g_{k+1}(X)=-g_k'(X) X(1-X)$, so they are independent of $\bn$. Also, $\bS(\bn) = (S_1(\bn),\dots,S_T(\bn))$ are defined so that $S_k(\bn)$ is the sum of all products of $k$-combinations of vector $\bn$ and is therefore independent of $X$. Similarly, $\bS(\bn^t)$ is defined in the same way for each subcontest.

The proof has two key steps. The first step (\cref{L:ftmeasures}) shows that we can express the inverted best-response functions through a weighted sums of measures of information as in \cref{E:ft_withSg}.

In particular, a sufficient condition that guarantees $X^*$ is strictly increasing in $\bS(\bn)$ is that each $g_k(X^*)>0$, i.e., the efforts are \emph{higher-order strategic substitutes} near equilibrium. The following \cref{C:cond2} formalizes this idea with a small caveat. Namely, as we will see below, in the case of fully the sequential contest $g_n(X^*)=0$ and therefore, the strict version of this definition is not satisfied. However, as I will prove, for any $n > 2$, it is sufficient that efforts are weak strategic substitutes of level $n$ as defined below.

\begin{assumption}[$K$-th order strategic substitutes near equilibrium] \label{C:cond2}
	Efforts are (weak) strategic substitutes of level $K$ near equilibrium, if $g_K(X^*) \geq 0$ and $g_k(X^*) > 0$ for all $k \in \{2,\dots,K-1\}$ at the equilibrium level of total effort $X^*$. Efforts are strict strategic substitutes of level $K$ near equilibrium, if they are strategic substitutes of level $K$ and $g_K(X^*) > 0$. 
\end{assumption}

\begin{lemma} \label{L:ftmeasures}
  The function $f_t(X)$ can be expressed as 
  \begin{equation} \tag{\ref{E:ft_withSg}}
    f_t(X) = X - \sum_{k=1}^{T-t} S_k(\bn^t) g_k(X).
  \end{equation}
\end{lemma}
\begin{proof}
  By construction, the subcontest starting after period $T$ has no players, so $S_k(\bn^T) = 0$ for any $k$. Therefore $f_T(X) = X$ satisfies \cref{E:ft_withSg}. Now, suppose that the characterization holds for $f_t(X)$. Then, since $g_{k+1}(X) = -g_k'(X) X(1-X)$, we get that
  \begin{align*}
  	f_t'(X) X(1-X)
  	&= X(1-X) - X(1-X) \sum_{k=1}^{T-t} S_k(\bn^t) g_k'(X)\\
  	&= g_1(X) + \sum_{k=2}^{T-t+1} S_{k-1}(\bn^t) g_k(X)
  	.
  \end{align*}
  Therefore, $f_{t-1}(X) = f_t(X) - n_t f_t'(X) g(X)$ implies that
  \begin{align*}
  	f_{t-1}(X)
  	&= X - \sum_{k=1}^{T-t} S_k(\bn^t) g_k(X)
  	- n_t g_1(X) - n_t \sum_{k=2}^{T-t+1} S_{k-1}(\bn^t) g_k(X)\\
  	&= X - [S_1(\bn^t)+n_t] g_1(X)
  	- \sum_{k=2}^{T-t} [S_k(\bn^t)+n_t S_{k-1}(\bn^t)] g_k(X)
  	- n_t S_{T-t}(\bn^t) g_{T+1}(X)
  	.
  \end{align*}
  Note that $S_1(\bn^t) = \sum_{s>t} n_s$ and $g_1(X)=g(X)$, so that $S_1(\bn^t) + n_t =S_1(\bn^{t-1})$. Similarly, $\bn^{t-1} = (n_t,\bn^t)$, so $S_k(\bn^t)$ includes all $k$-combinations of $\bn^{t-1}$ except the ones involving $n_t$. Adding $n_t S_{k-1}(\bn^t)$ therefore completes the sum, so that $S_k(\bn^{t-1}) = S_k(\bn^t) + n_t S_{k-1}(\bn^t)$. Since $S_{T-t}(\bn^t) = n_{t+1} \dots n_T$, we have that $n_t S_{T-t}(\bn^t) =
  n_t \times \dots \times n_T = S_{T-(t-1)}(\bn^{t-1})$. Therefore, we can express $f_{t-1}(X)$ as
  \begin{equation*}
  	f_{t-1}(X)
  	= X - \sum_{k=1}^{T-(t-1)} S_k(\bn^{t-1}) g_k(X).
  \end{equation*}
\end{proof}

\begin{proposition}[Efforts are higher-order strategic substitutes] \label{P:cond2_tullock}
Take a contest $\bn$ with $T \leq n$ periods with positive number of players. Then
\begin{enumerate}
	\item If $T<n$, efforts are strict strategic substitutes of level $T$ near equilibrium.
	\item If $T=n$, efforts are weak strategic substitutes of level $n$ near equilibrium.
\end{enumerate}
\end{proposition}

As the proof is long, I include the proof as a separate subsection below.
With these results, the proof of the information theorem is now straightforward.

\begin{proof}[Proof of \cref{T:information}]
\newcommand{\hT}{\widehat{T}}
Take two contests $\bn$ and $\hbn$ such that $\bS(\hbn) > \bS(\bn)$, i.e., $S_k(\hbn) \geq S_k(\bn)$ for all $k$ and the inequality is strict for at least one $k$. Let $T$ and $\hT$ be the number of periods with strictly positive number of players in contests $\bn$ and $\hbn$ respectively. Notice that by assumptions, $T \leq \hT$ and $S_k(\bn) = 0$ for all $k > T$.		
By \cref{T:characterization}, the total equilibrium $X^*$ is the highest root of $f_0(X)$ in $[0,1]$. By \cref{L:ftmeasures}, we can express $f_0(X^*)$ as
\begin{equation*}
	f_0(X^*) = X^* - \sum_{k=1}^{T} S_k(\bn) g_k(X^*)
	= X^* - \sum_{k=1}^{\hT} S_k(\bn) g_k(X^*).
\end{equation*}
Similarly, let $\hX^*$ be the total equilibrium effort in contest $\hbn$. It is the highest root of $\hf_0(X^*)$ in $[0,1]$, which we can write as
\[
\hf_0(\hX^*)  = \hX^* - \sum_{k=1}^{\hT} S_k(\hbn) g_k(\hX^*)
.
\]
Suppose by contradiction that the claim of the theorem does not hold and so $\hX^* \leq X^*$. Since by \cref{C:cond1} $\hf_0$ is strictly increasing in $[\hX^*,1]$, we get that $0 = \hf_0(\hX^*) \leq \hf_0(X^*)$. Therefore
\[
0
\leq 
\hf_0(X^*)
- 
f_0(X^*)
=
- \sum_{k=1}^{\hT} [S_k(\bn)-S_k(\bn)] g_k(X^*)]
\]
As $S_k(\bn) \geq S_k(\bn)$ and $g_k(X^*)\geq 0$ for each $k \in \{1,\dots,K\}$, the sum on the right-hand side is non-positive. 
Moreover, of at least one $k$, we have $S_k(\bn)>S_k(\bn)$. Now, by \cref{P:cond2_tullock}, if $\hT < n$, the efforts are strict strategic substitutes, so $g_{k}(X^*)>0$, therefore we get a contradiction. 

Finally, suppose that $\hT=n$. As the only $n$-player contest with a positive number of players in $n$ periods is the fully sequential contest, we must have $\hbn = (1,1,\dots,1)$.
It is straightforward to verify that then $S_n(\hbn)=1$ and $S_{n-1}(\hbn)=n$. Now, notice that since the contest $\bn$ is strictly less informative than $\hbn$, it must have at least one period with two players. Let us replace it with a new contest $\bn'$, where we have split all players into separate periods and left only one period with two players, i.e., the contest $\bn'$ is a permutation of $(2,1,1,\dots,1)$. Clearly, the contest $\bS(\bn') \geq \bS(\bn)$. As in both contests $\bn$ and $\bn'$ the number of periods is strictly less than $n$, the part of the theorem we already proved implies that the corresponding equilibrium effort $X^{*'} \geq X^*$.

In contest $\bn'$, $S_n(\bn') = 0 < S_n(\hbn)$, $S_{n-1}(\bn') = 2 < n S_{n-1}$, and $S_k(\bn') \leq S_k(\hbn)$ for all $k<n-1$. As by \cref{P:cond2_tullock}, the efforts are weak strategic substitutes of level $n$ near equilibrium, this proves that $X^* \leq X^{*'} < \hX^* \leq X^*$. This is a contradiction.
\end{proof}

Remark: the last paragraph of the proof shows why we need the assumption that $n> 2$. Otherwise, in two-player contest, the sequential contest implies $\bS(1,1) = (2,1)$ and simultaneous contest $\bS(2) = (2,0)$. These two contests only differ by the measure of information of level $2$. As the efforts are only weakly strategic substitutes at $X^*$, the proof would not be valid. Indeed, it is straightforward to check that with $n=2$, $X^*=0.5$ and $g_2(0.5) = 0$, so the two contest would give the same total effort. With $n=3$, this issue does not arise, as $S_2(1,1,1) = 3$ and with any other three-player contest $S_2(\bn) \leq 2$.

\subsection{Proof That Efforts Are Higher-Order Strategic Substitutes (\Cref{P:cond2_tullock})}

Remember that $g_1(X)=X(1-X)$ and $g_k(X) = -g_{k-1}'(X) X (1-X)$ for all $k > 1$. Therefore, $g_1(X)$ is a second-degree polynomial, $g_2(X)$ third-degree, and so on. In particular, $g_k(X)$ is a polynomial of degree $k+1$ and therefore has up to $k+1$ real roots. In the following, I show that all roots are real and in $[0,1]$. Let these roots be denoted as $Z_{0:k} \leq Z_{1:k} \leq \dots \leq Z_{k:k}$. The proof keeps track of the order and locations of these roots and their comparison with $X^*$.

\begin{proof}[Proof of \cref{P:cond2_tullock}]
    The proof relies on three lemmas that I prove below. 
    
    \begin{enumerate}
        \item \Cref{L:tullock_g_props} shows that the highest root of $g_k$ is $Z_{k:k}=1$, the second highest $Z_{k-1:k} \in (Z_{k-2:k-1},1)$, and $g_k(X)>0$ for all $X$ between the highest two roots.
        Therefore, to prove that $g_k(X^*)>0$, it suffices to show that $X^*>Z_{k-1:k}$.
        \item \Cref{L:ft_gt_tullock} establishes a connection between the total equilibrium effort $X^*$ and $Z_{k-1:k}$. It shows that if we take the sequential $n$-player contest $\bn=(1,\dots,1)$, then $f_{n-k}(X) = g_k(X) \frac{X}{1-X}$ for all $k =1,\dots,n$. Therefore, if we take the fully sequential contest with $n$ players, we get $f_0(X)=g_n(X) \frac{X}{1-X}$, and so the total equilibrium effort $X^*$ of this contest is exactly equal to the second highest root of $g_n$, i.e., $Z_{n-1:n}$.

        This proves the ``weak'' part of the proposition, i.e., if $\bn$ is fully sequential, then $X^* = Z_{n-1:n}$, which is a root of $g_n$, and therefore $g_k(X^*)=0$.
        \item \Cref{L:Xmonotone} shows directly\footnote{Note the first part of \cref{C:information} proves the same claim, but since \cref{P:cond2_tullock} establishes a sufficient condition for \cref{T:information} and hence its \cref{C:information}, to avoid a circular argument I prove it here directly.} that $X^*$ is strictly increasing in each $n_t$. Therefore, if the contest is not sequential ($n_t>1$ for some $t$), then the total effort in this contest is strictly higher than in the fully sequential $T$-player contest. Thus, $X^* > Z_{T-1:T}$ and $g_T(X^*) > 0$.
        \item Finally, \cref{L:tullock_g_props} also shows that the adjacent $g_k$'s are interlaced; i.e., the second highest roots are increasing in $k$, so that for all $k<T$, $Z_{k-1:k} < Z_{T-1:T} \leq X^*$, and therefore $g_k(X^*) > 0$ for all $k<T$.
    \end{enumerate}
\end{proof}

\begin{lemma} \label{L:tullock_g_props}
    Each $g_k$ has the following properties:
    \begin{enumerate}
        \item $g_k(1) = g_k'(1) = -1$.
        \item $g_k$ can be expressed as
        \begin{equation}
        g_k(X) = -\prod_{j=0}^k (X-Z_{j:k}),
        \end{equation}
        where $0=Z_{0:k} < Z_{1:k} < \dots < Z_{k:k} = 1$.
        \item $Z_{s:k+1} \in (Z_{s-1:k},Z_{s:k})$ for all $s=1,\dots,k$.
    \end{enumerate}
\end{lemma}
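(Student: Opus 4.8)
The plan is to establish all four properties simultaneously by induction on $k$, carrying as the inductive hypothesis the degree, the full list of roots \emph{together with their simplicity}, the leading coefficient, and the two boundary values at $X=1$. The base case $k=1$ is a direct computation: $g_1(X)=X(1-X)=-(X-0)(X-1)$ is a degree-two polynomial with simple roots $Z_{0:1}=0<Z_{1:1}=1$, negative leading coefficient $-1=-1!$, and $g_1'(X)=1-2X$ gives $g_1'(1)=-1$ together with the endpoint value $g_1(1)=0$ (i.e.\ $X=1$ is a root, as property~3 requires).

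For the inductive step I would read the recursion $g_{k+1}=-g_k'\,g$ with $g(X)=X(1-X)$ as making the roots of $g_{k+1}$ completely transparent: they are exactly the roots of $g_k'$ together with the two endpoints $0$ and $1$ supplied by the factor $g$. The engine is Rolle's theorem. Assuming $g_k$ has the $k+1$ distinct roots $0=Z_{0:k}<\dots<Z_{k:k}=1$, Rolle forces at least one root of $g_k'$ strictly inside each of the $k$ gaps $(Z_{j-1:k},Z_{j:k})$; since $g_k'$ has degree $k$ it has at most $k$ roots, so there is \emph{exactly} one simple critical point $Y_j$ in each gap, and every $Y_j\in(0,1)$. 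Hence $g_{k+1}$ has the $k+2$ distinct values $0<Y_1<\dots<Y_k<1$ as roots. Labelling $Z_{0:k+1}=0$, $Z_{s:k+1}=Y_s$ for $s=1,\dots,k$, and $Z_{k+1:k+1}=1$ gives property~3 at level $k+1$, and the containment $Y_s\in(Z_{s-1:k},Z_{s:k})$ is verbatim the interlacing in property~4. I would also record that $0$ and $1$ are simple roots of $g_k$ by hypothesis, so $g_k'(0)\neq0$ and $g_k'(1)=-1\neq0$; thus the endpoint roots of $g_{k+1}$ come solely from $g$, remain simple, and all $k+2$ roots are distinct.

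The remaining properties then follow by one-line propagation along the recursion. For property~1, $g(1)=0$ gives $g_{k+1}(1)=-g_k'(1)g(1)=0$; for property~2, differentiating $g_{k+1}=-g_k'g$ and using $g(1)=0$, $g'(1)=-1$, and the hypothesis $g_k'(1)=-1$ yields $g_{k+1}'(1)=-g_k''(1)g(1)-g_k'(1)g'(1)=-(-1)(-1)=-1$. Finally, matching leading terms in $g_{k+1}=-g_k'g$ shows the leading coefficient obeys $a_{k+1}=(k+1)a_k$, so $a_k=-k!<0$; this pins down the degree as $k+1$ and fixes the sign in the factored form of property~3.

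The one genuinely delicate point is the root-counting in the inductive step, which is why I would insist on carrying \emph{simplicity} of the roots through the induction rather than treating it as incidental. Everything rests on $g_k$ having $k+1$ distinct roots: distinctness is precisely what lets Rolle deliver one critical point per gap and, combined with the degree bound on $g_k'$, forces these to be simple and strictly interior, so that $g_{k+1}$ inherits a complete set of $k+2$ distinct roots with $0$ and $1$ at the ends. Were a double root to appear at any stage, the count would collapse and the clean interlacing in property~4 would break down.
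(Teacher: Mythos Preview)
Your proof is correct and follows essentially the same inductive scheme as the paper's. The one cosmetic difference is that where the paper locates the roots of $g_k'$ by evaluating the product form of $g_k'$ at each $Z_{s:k}$ and reading off the alternating signs, you invoke Rolle's theorem directly; combined with the degree count, this gives exactly the same conclusion (one simple critical point in each gap) with slightly less bookkeeping. You also track the leading coefficient more carefully than the paper does: your computation $a_{k+1}=(k+1)a_k$, $a_k=-k!$, is correct, and the ``$-1$'' in the displayed factorization of property~3 (as well as the ``$g_k(1)=1$'' in property~1) are evidently typos in the statement---the paper's own proof establishes $g_k(1)=0$, and only the \emph{sign} of the leading coefficient is used there.
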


\begin{proof}
    First note that $g_1(X)=g(X)=X(1-X)$ is a polynomial of degree $2$. Each step of the recursion gives a polynomial of one degree higher; i.e., $g_k(X)$ is a polynomial of degree $k+1$, so $g_k'(X)$ is a polynomial of degree $k$, and therefore $g_{k+1}(X) = -g_k'(X) X (1-X)$ is a polynomial of degree $k+2$.
    \begin{enumerate}
        \item $g_{k+1}(1) = -g_k'(1) g(1) = 0$, because $g(1)=1 (1-1)=0$. Therefore
        $g_k'(1)
        = -g_{k-1}''(1) g(1) - g'(1) g_{k-1}'(1)
        = g_{k-1}'(1)
        = \dots
        = g_1'(1)
        = g'(1)
        = 1-2 \cdot 1=-1$.
        \item The claim clearly holds for $g_1(X) = X(1-X)$ with $Z_{0:1}=0<Z_{1:1}=1$. Suppose it holds for $k$.
        Since all $k+1$ roots of $g_k$ are real and in $[0,1]$, by the Gauss-Lucas theorem all $k$ roots of $g_k'$ are in $(0,1)$.
        Then $g_{k+1}(X) = -g_k'(X) X (1-X)$ clearly has roots at $0$ and $1$ and $k$ roots in $(0,1)$. To see that the roots are all distinct, note that
        \begin{equation*}
        g_k'(X) = - \sum_{s=0}^k \prod_{j \neq s} (X-Z_{j:k}).
        \end{equation*}
        Therefore,
        $g_k'(Z_{s:k}) = - \prod_{j \neq s} (Z_{s:k}-Z_{j:k})$,
        which is strictly negative for $s=k$, strictly positive for $s=k-1$, and so on. Therefore, for each $s=1,\dots,k$, function $g_k'$; hence,  $g_{k+1}$ also has a root $Z_{s:k+1} = (Z_{s-1:k},Z_{s:k})$. This determines the $k$ interior roots.
        \item The previous argument also proves the last claim.
    \end{enumerate}
\end{proof}

\begin{lemma} \label{L:ft_gt_tullock}
    If $\bn = (1,\dots,1)$, then $f_{n-k}(X) = g_k(X) \frac{X}{1-X}$ for all $k=1,\dots,T$.
\end{lemma}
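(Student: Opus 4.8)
The plan is to prove the identity by induction on $k=1,\dots,n$, using that in the fully sequential contest every $n_t=1$ and $T=n$, so that the recursion \eqref{E:recdef} collapses to $f_{t-1}(X)=f_t(X)-f_t'(X)g(X)$ with $g(X)=X(1-X)$. It is convenient to write $\psi(X)=\frac{X}{1-X}$, so that the claim becomes $f_{n-k}=g_k\psi$, and to record the two elementary identities $\psi(X)g(X)=X^2$ and $\psi'(X)g(X)=\psi(X)$; the latter follows from $\psi'(X)=1/(1-X)^2$. These are the only computational facts the argument needs.

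For the base case $k=1$ I would apply the reduced recursion once to $f_n(X)=X$, obtaining $f_{n-1}(X)=X-g(X)=X^2$, which equals $g_1(X)\psi(X)=X(1-X)\psi(X)=X^2$. For the inductive step, assuming $f_{n-k}=g_k\psi$, I would compute $f_{n-k-1}=f_{n-k}-f_{n-k}'g$ by differentiating the product $g_k\psi$:
\begin{equation*}
f_{n-k}'g=(g_k'\psi+g_k\psi')g=g_k'(\psi g)+g_k(\psi' g)=g_k'X^2+g_k\psi.
\end{equation*}
Subtracting this from $f_{n-k}=g_k\psi$ makes the $g_k\psi$ terms cancel, leaving $f_{n-k-1}=-g_k'(X)X^2$. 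Finally I would match this to the target: by the definition $g_{k+1}=-g_k'g$ one has $g_{k+1}\psi=-g_k'(X)g(X)\psi(X)=-g_k'(X)X^2$, so $f_{n-k-1}=g_{k+1}\psi$, which is exactly the claim for $k+1$ (the step is valid for $k\le n-1$, and together with the base case it covers all $k=1,\dots,n$).

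The argument is essentially mechanical, so there is no real obstacle; the only point worth highlighting is why the cancellation is not a coincidence. The factor $\psi=\frac{X}{1-X}$ is chosen precisely so that multiplying $g$ by it returns $X^2$ while its derivative reproduces $\psi$ itself, which is what makes the recursion for $f$ translate cleanly into the recursion $g_{k+1}=-g_k'g$ defining the $g_k$. One should also note that the identity is genuinely an identity of polynomials: since $Z_{k:k}=1$ is a root of $g_k$ by \cref{L:tullock_g_props}, the factor $(1-X)$ cancels and $g_k(X)\frac{X}{1-X}=X\prod_{j=0}^{k-1}(X-Z_{j:k})$, so the apparent pole at $X=1$ is removable and both sides agree as polynomials, consistent with $f_{n-k}$ being a polynomial.
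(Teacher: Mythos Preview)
Your proof is correct and follows essentially the same inductive approach as the paper: both establish the base case $f_{n-1}(X)=X^2=g_1(X)\frac{X}{1-X}$, and both carry out the inductive step via the identity $\psi'(X)g(X)=\psi(X)$ for $\psi(X)=\frac{X}{1-X}$. Your packaging via the additional shorthand $\psi g=X^2$ and your closing remark that $1-X$ divides $g_k$ (so the right-hand side is a genuine polynomial) are nice touches not in the paper, but the underlying argument is the same.
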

\begin{proof}
    Suppose that $\bn = (1,\dots,1)$.
    First, $f_{n-1}(X) = X - X (1-X) = X^2 = g_1(X) \frac{X}{1-X}$. Now, suppose that $f_{n-k}(X) = g_k(X) \frac{X}{1-X}$. Then since
    \begin{equation*}
    \dd{\frac{X}{1-X}}{X} X (1-X)
    = \left[ \frac{1}{1-X} - \frac{-X}{(1-X)^2}\right] X (1-X)
    = \frac{X}{1-X},
    \end{equation*}
    we get that
    \begin{align*}
    f_{n-(k+1)}(X)
    &= g_k(X) \frac{X}{1-X}
    - g_k(X) \dd{\frac{X}{1-X}}{X} X(1-X)
    - g_k'(X) \frac{X}{1-X} X(1-X)\\
    &= g_{k+1}(X) \frac{X}{1-X}.
    \end{align*}
\end{proof}

\begin{lemma} \label{L:Xmonotone}
    $X^*$ is strictly increasing in each $n_t$.
\end{lemma}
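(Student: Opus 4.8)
The plan is to read both claims off the explicit representation \eqref{E:f0sumrepr}, $f_0(X)=X-\sum_{k=1}^{T}S_k(\bn)g_k(X)$, in which all of the dependence on $\bn$ is carried by the information measures $S_k(\bn)$ while the $g_k$ are fixed functions. Since $S_k(\bn)$ is the $k$-th elementary symmetric polynomial of $n_1,\dots,n_T$, it is invariant under permutations of the $n_t$; hence if $\hbn$ is a permutation of $\bn$ then $f_0$ is literally the same polynomial, and so is its highest root $X^*=\uX_0$. That disposes of permutation-independence. I deliberately route everything through \eqref{E:f0sumrepr} and \cref{C:cond1} only, avoiding \cref{T:information} and \cref{C:information}, since those rest on \cref{C:cond2}, which this lemma is being invoked to establish.

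For monotonicity in a single coordinate, the central object is the comparative static $\partial_{n_t}f_0$. Write $\bn^{-t}$ for the $(T-1)$-period contest obtained by deleting period $t$. I would first record the standard symmetric-polynomial identity $S_k(\bn)=n_tS_{k-1}(\bn^{-t})+S_k(\bn^{-t})$, so that $\partial_{n_t}S_k(\bn)=S_{k-1}(\bn^{-t})$, and hence $\partial_{n_t}f_0(X)=-\sum_{j=0}^{T-1}S_j(\bn^{-t})g_{j+1}(X)$. Using $g_{j+1}=-g_j'g$ with the convention $g_0(X)=1-X$ (so that $-g_0'g=g=g_1$), this collapses: by \eqref{E:f0sumrepr} applied to $\bn^{-t}$ one has $\sum_{j=0}^{T-1}S_j(\bn^{-t})g_j=1-f_0^{\bn^{-t}}$, whose derivative is $-(f_0^{\bn^{-t}})'$, yielding the clean formula
\begin{equation*}
  \partial_{n_t}f_0(X)=-g(X)\,\bigl(f_0^{\bn^{-t}}\bigr)'(X),
\end{equation*}
where $f_0^{\bn^{-t}}$ is the inverted best-response function of the reduced contest. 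Equivalently, $f_0$ is affine in $m:=n_t$, namely $f_0(X;m)=f_0^{\bn^{-t}}(X)-m\,g(X)\,(f_0^{\bn^{-t}})'(X)$; this same decomposition can also be obtained directly from the recursion \eqref{E:recdef}, since the steps above period $t$ merely reproduce $f_0^{\bn^{-t}}$ and the factor $n_t$ enters linearly.

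With this formula the rest is a single-crossing comparison that uses only \cref{C:cond1} for the \emph{reduced} contest. Let $Y=\uX_0^{\bn^{-t}}\in(0,1)$ be the highest root of $f_0^{\bn^{-t}}$. On $(Y,1)$ we have $g(X)>0$ and $(f_0^{\bn^{-t}})'(X)>0$ (part 3 of \cref{C:cond1} for $\bn^{-t}$), so $f_0(X;m)$ is strictly decreasing in $m$ for each such $X$. Moreover $f_0(Y;m)=-m\,g(Y)(f_0^{\bn^{-t}})'(Y)<0$ for $m>0$, while $f_0(1;m)=1>0$ (by \cref{L:ftat1}, using $g(1)=0$), so the highest root $X^*(m)$ lies in $(Y,1)$. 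Then for $m'>m$ I evaluate at the earlier root: $f_0(X^*(m);m')<f_0(X^*(m);m)=0<f_0(1;m')$, so $f_0(\cdot;m')$ has a root strictly above $X^*(m)$, forcing $X^*(m')>X^*(m)$. Specializing to consecutive integers gives strict monotonicity in $n_t$, and taking $m=0$ recovers $X^*(\bn)>X^*(\bn^{-t})$, which is exactly the comparison the proof of \cref{P:cond2_tullock} then exploits.

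The main obstacle is the derivation of the closed form $\partial_{n_t}f_0=-g\,(f_0^{\bn^{-t}})'$: the telescoping that turns $\sum_jS_j(\bn^{-t})g_{j+1}$ into $g\,(f_0^{\bn^{-t}})'$ is where the real work lies, and it is precisely what converts a comparative static that \emph{a priori} involves the not-yet-signed quantities $g_k(X^*)$ into one whose sign is pinned down entirely by \cref{C:cond1} for a smaller contest. Once the formula is in place the single-crossing step is routine; the only bookkeeping is the degenerate case $T=1$, where $\bn^{-t}$ is the empty contest with $f_0^{\bn^{-t}}(X)=X$ and $Y=0$, which I would simply handle by reading off the explicit root $X^*=1-1/n_1$, manifestly strictly increasing in $n_1$.
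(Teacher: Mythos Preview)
Your proof is correct and takes a genuinely different route from the paper's. The paper argues permutation independence by a bare-hands recursion computation: it expands $f_{t-2}$ in terms of $f_t$ and observes the resulting expression is symmetric in $n_{t-1}$ and $n_t$, so adjacent transpositions leave $f_0$ unchanged. You instead invoke the representation \eqref{E:f0sumrepr} (i.e., \cref{L:ftmeasures}) and the fact that the $S_k$ are elementary symmetric polynomials, which kills permutation dependence in one line. For monotonicity, the paper first uses permutation independence to reduce to the coordinate $n_1$, then reads off from the recursion that $\hf_0(X^*)=f_0(X^*)-f_1'(X^*)g(X^*)<0$, using \cref{C:cond1} for the \emph{original} contest (since $f_1$ is one of its own $f_t$'s). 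You instead differentiate \eqref{E:f0sumrepr} in $n_t$ directly and telescope via $g_{j+1}=-g_j'g$ to the closed form $\partial_{n_t}f_0=-g\,(f_0^{\bn^{-t}})'$; the sign then comes from \cref{C:cond1} applied to the \emph{reduced} contest $\bn^{-t}$. After the paper's reduction to $t=1$ the two computations coincide (your $f_0^{\bn^{-1}}$ is the paper's $f_1$), so the core inequality is the same; what differs is the machinery used to get there. Your route buys a cleaner structural formula valid for every $t$ and a continuous parameter $m=n_t$, at the cost of importing \cref{L:ftmeasures} (proved later in the paper, though logically independent of this lemma) and needing \cref{C:cond1} for a different contest---which is fine here since \cref{P:cond1_tullock} supplies it for all Tullock contests.
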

\begin{proof}
    I first show that $X^*$ is independent of permutations of $\bn$. Fix a contest $\bn$ and a period $t>1$.
    To shorten the notation, let $\phi_t(X) = f_t'(X) X (1-X)$.
    \begin{align*}
    f_{t-1}(X)
    &
    = f_t(X) - n_t \phi_t(X), \\
    f_{t-1}'(X)
    &= f_t'(X) - n_t \phi_t'(X) 
    = \frac{\phi_t(X)}{g(X)} - n_t \phi_t'(X), 
    \\
    f_{t-2}(X)
    &
    = f_t(X) - [n_{t-1}+n_t] \phi_t(X) + n_{t-1} n_t \phi_t'(X) X (1-X).
    \end{align*}
    Switching $n_{t-1}$ and $n_t$ in $\bn$ does not affect $f_{t-2}$, and therefore it also doesn't affect  $f_0$. This means that any such switch leaves $X^*$ unaffected, which means that $X^*$ is independent of permutations of $\bn$.

    To prove that $X^*$ is strictly increasing in each $n_t$, it therefore suffices to prove that it is strictly increasing on $n_1$. Take $\hbn = (n_1+1,n_2,\dots,n_T)$. Then  $f_1$ is unchanged and the corresponding $\hf_0$ at the original equilibrium $X^*$ is
    \begin{equation*}
    \hf_0(X^*)
    = f_1(X^*) - (n_1+1) f_1'(X^*) X^* (1-X^*)
    = f_0(X^*) - f_1'(X^*) X^* (1-X^*)
    < 0,
    \end{equation*}
    because $f_0(X^*)=0$ and $f_1(X^*) > 0$ by \cref{C:cond1}. By \cref{C:cond1}, $\hf_0$ is strictly increasing between its highest root $\hX^*$ and $1$, thus $\hX^* > X^*$.
\end{proof}

\subsection{Proof of Decreasing Weights Lemma (\Cref{L:gktullock})}

This lemma allows to order some contests, which cannot be ranked according to their information measures. For example, two 10-player contests $\bn = (5,5)$ and $\hbn = (8,1,1)$ have corresponding information measures $\bS(\bn)=(10,25)$ and $\bS(\hbn)=(10,17,8)$. Contest $\bn$ has more second-order information, but $\hbn$ has one more disclosure and thus more third-order information. However, the sum of all information measures is $10+25 = 10+17+8=35$. Since the weights are higher in lower-order information, this implies that the total effort is higher in the first contest. Indeed, direct application
\cref{T:characterization} confirms this, as $X^*= \frac{13+\sqrt{41}}{20} \approx 0.9702 > \hX^* = \frac{31+\sqrt{241}}{48} \approx 0.9693$.

\begin{proof}[Proof of \cref{L:gktullock}]
	By \cref{L:ft_gt_tullock}, 
	$g_k(X) = \hf_{\hn-k}(X) \frac{1-X}{X}$, where $\hf_{\hn-k}$ is defined for a sequential $\hn \geq k$-player contest. Similarly, $g_{k-1}(X) = \hf_{n+1-k}(X) \frac{1-X}{X}$. Therefore,
	\[
	g_{k-1}(X^*) - g_k(X^*)
	= [\hf_{\hn+1-k}(X^*)-\hf_{\hn-k}(X^*)] \frac{1-X^*}{X^*}
	= \hf_{\hn+1-k}'(X^*) (1-X^*)^2
	.
	\]
	Now, take $\hn = T$. Then by \cref{L:Xmonotone}, $X^*$ is weakly higher than the highest root of $\hf_0$. By \cref{C:cond1}, the highest root of $\hf_{T+1-k}$ is even (weakly) lower and $\hf_{T+1-k}$ is strictly increasing above its highest root, so that $\hf_{\hn+1-k}'(X^*) > 0$. This proves that $g_{k-1}(X^*) > g_k(X^*)$.  
\end{proof}

\subsection{Proofs of Implications of the Information Theorem (\Cref{C:information})}

\begin{proof}[Proof of \cref{C:information}]
	Take two contests $\bn^1$ and $\bn^2$ and let $X^1$ and $X^2$ be the corresponding total equilibrium efforts.
	\begin{enumerate}
		\item Suppose that $\bn^1 < \bn^2$. Then $\bS(\bn^1) < \bS(\bn^2)$ and therefore $X^1 < X^2$.
		\item If $\bn^1$ is a permutation of $\bn^2$, then $\bS(\bn^1) = \bS(\bn^2)$ and therefore $X^1=X^2$.
		\item If $\prts^1$ is a coarser partition than $\prts^2$, then $\bS(\bn^1) < \bS(\bn^2)$ and therefore $X^1 < X^2$.
		\item If $\sum_t n_t^1 = \sum_t n_t^2 = n$ and there exist $t,t'$ such that $n_t^1 n_{t'}^1 < n_t^2 n_{t'}^2$ and $n_s^1 = n_s^2$ for all $s \neq t,t'$, then by construction $S_1(\bn^1)=S_2(\bn^2) = n$ and $S_k(\bn^1) < S_k(\bn^2)$ for all $k>1$. Therefore $X^1 < X^2$.
		\item Let $\bn^1 = (n)$. Then for any $\bn^2 \neq \bn^1$, $\bS(\bn^1) < \bS(\bn^2)$, so indeed $X^1$ is the unique minimum of $X^*$ over all contests. Similarly, if $\bn^2 = (1,1,\dots,1)$, any other contest has strictly lower measures of information and therefore $X^2$ is the unique maximum of $X^*$ over all contests.
		
		To establish the final claim of the optimality of equal division of players, let $\bn^1$ be $n$-player contests where players are distributed among at most $T$ periods. Suppose by contradiction that the corresponding total equilibrium effort $X^*$ is a maximum over all such contests and $\bn^1$ does not split players as equally as possible. In particular, let $k = \lfloor n/T \rfloor$. Equal split requires that each period has either $n_t^1 \in \{k,k+1\}$ players. Since this is not the case, there exists a period $t$ where $n_t^1 \leq k-1$ and a period $s$ where $n_s^1 \geq k+1$ (or $t,s$ such that $n_t^1 \leq k$ and $n_s^1 \geq k+2$, then the proof is analogous). 
		
		We can now construct a new contest, $\bn^2$, where we have moved one player from period $s$ to period $t$. Then
		as $n_s^1-1 \geq k > n_t^1$,  
		\[
		n_t^2 n_s^2
		= 
		(n_t^1+1) (n_s^1-1)
		=
		n_t^1 n_s^1
		-n_t^1
		+n_s^1
		-1
		>
		n_t^1 n_s^1.
		\]
		Therefore the contest $\bn^2$ is more homogeneous than $\bn^1$ and so $X^1 < X^2$ by the previous step. Thus, we found a contradiction with the assumption that $X^1$ is a maximal total effort among such contests.
	\end{enumerate} 
\end{proof}

\subsection{Proof of the Earlier-Mover Advantage (\Cref{P:earliermover})}

\begin{proof}[Proof of \cref{P:earliermover}]
	The equilibrium payoff of player $i$ is $u_i(\bx^*) = x_i^* \left( \frac{1}{X^*}-1 \right)$, so the payoffs are ranked in the same order as the individual efforts (in fact they are proportional to individual efforts).
	Therefore, it suffices to prove that if $i \in \prt_t$ and $j \in \prt_{t+1}$, then $x_i^* > x_j^*$.
	Using \cref{T:characterization} and \cref{E:ft_withSg}, the difference in equilibrium efforts can be expressed as
	\begin{equation*}
		x_i^* - x_j^*
		= \sum_{k=1}^{T-t} \left[ S_k(\bn^t) - S_k(\bn^{t+1}) \right] g_{k+1}(X^*).
	\end{equation*}
	Now, note that $\bS(\bn^t) \geq \bS(\bn^{t+1})$ as there is less information remaining in the game that starts one period later. Moreover, $S_1(\bn^t) > S_1(\bn^{t+1})$ as $\bn^t$ includes player $j$, whereas $\bn^{t+1}$ does not. Finally note that by \cref{P:cond2_tullock}, $g_2(X^*) > 0$ and therefore $x_i^* - x_j^* > 0$.
\end{proof}

\subsection{Proof of the Large Contests Limit (\Cref{P:largecontests})}

\begin{proof}[Proof of \cref{P:largecontests}]
	By \cref{T:characterization}, each $X^n < 1$.
	On the other hand, by \cref{T:information}, $X^n \geq \frac{n-1}{n}$, which is the total equilibrium effort of the simultaneous $n$-player contest (see \cref{S:example}). Therefore $\lim_{n \to \infty} X^n = 1$.
	
	The total equilibrium effort of a censored contest $\bn^n$ is the highest root of $f_0(X)$, which can be expressed by \cref{E:f0_withSg} as 
	\begin{equation} \label{E:f0_withSg2}
		X^n = \sum_{k=1}^T S_k(\bn^n) g_k(X^n).
	\end{equation}	
	For each $k$, function $g_k(X)$ is a twice continuously differentiable function (a polynomial), $g_k(1) = 0$, and
	$g_k'(1) = - g_{k-1}''(1) g(1) - g_{k-1}'(1) g'(1) = g_{k-1}'(1) = \dots = g_1'(1) = - 1$, as $g_1(X)=X(1-X)$.
	Therefore, for all $k > 1$,
	\begin{equation*}
		\lim_{X \to 1} \frac{g_k(X)}{X(1-X)}
		=
		\lim_{X \to 1} \frac{-g_{k-1}'(X) X(1-X)}{X(1-X)}
		=
		-g_{k-1}'(1)
		=
		1.
	\end{equation*}
	Taking limits from both sides of \cref{E:f0_withSg2} and using the result that $\lim_{n \to \infty} X^n = 1$, 
	\begin{align*}
		1 &= \lim_{n \to \infty} X^n
		= \lim_{n \to \infty} \sum_{k=1}^T S_k(\bn^n) \frac{g_k(X^n)}{X^n (1-X^n)} X^n (1-X^n)
		= \lim_{n \to \infty} (1-X^n) \sum_{k=1}^T S_k(\bn^n) .
	\end{align*}
	To shorten the notation, let $S^n = \sum_{k=1}^T S_k(\bn^n)$. Rearranging the previous equation gives
	\begin{align} \label{E:Slimit}
		0 &= \lim_{n \to \infty} [1-(1-X^n)S^n ]
		=	\lim_{n \to \infty}  \left[ X^n - \left(1-\frac{1}{S^n} \right) \right] S^n.
	\end{align}
	We can express $S^n = \sum_{k=1}^T S_k(\bn^n) = \prod_{t=1}^T (1+n_t^k) - 1$. As $\lim_{n \to \infty} S^n = \infty$, \cref{E:Slimit} implies that
	\[
	\lim_{n\to \infty} \left[ X^n - \left(1-\frac{1}{S^n} \right) \right] 
	=
	\lim_{n\to \infty} \left[ X^n - \left(1-\frac{1}{\prod_{t=1}^T (1+n_t^n)} \right) \right] 
	= 0.
	\]
	For individual effort of player $i \in \prt_t^n$, we can use \cref{T:characterization} and \cref{E:ft_withSg} to get
	\begin{align*}
		x_i^n
		=  
		g_1(X^n) + \sum_{k=1}^{T-t} S_k(\bn^t) g_{k+1}(X^n).
	\end{align*}
	Taking the limit, again using the facts that $X^n \to 1$ and $ \frac{g_{k+1}(X^n)}{X^n (1-X^n)} \to 1$,
	\begin{align*}
		\lim_{n \to \infty} x_i^n
		&
		=
		\lim_{n \to \infty} (1-X^n)	\left[
		1 + \sum_{k=1}^{T-t} S_k(\bn^t) 
		\right].
	\end{align*}
	Now, note that $1 + \sum_{k=1}^{T-t} S_k(\bn^t) = \prod_{s=t}^T (1+n_s^n)$. Therefore, using the result from above, we can express the last equation as
	\begin{align*}
		0
		&=
		\lim_{n \to \infty} \left[ 
		x_i^n
		-
		(1-X^n) \left(
		1 + \sum_{k=1}^{T-t} S_k(\bn^t) 
		\right)
		\right] 
		= \lim_{n \to \infty} \left[ 
		x_i^n
		-
		\frac{1}{\prod_{s=t}^T (1+n_s^n)}
		\right] 
	\end{align*}
\end{proof}

\clearpage

\section{General Results} \label{A:general}

In this appendix, I show that the approach introduced in the paper applies more generally.
As in the main text, the set of players $\prt = \{1,\dots,n\}$ is partitioned into groups $\{\prt_1,\dots,\prt_T\}$, such that all $n_t = \# \prt_t$ players in group $t$ observe the cumulative sum of actions $X_{t-1}=\sum_{s=1}^{t-1} \sum_{j \in \prt_s} x_j$ of players in all previous groups. The total action $X = \sum_{i=1}^n x_i$. Before any player has chosen the action, the cumulative action is $X_0 = 0$ and after all players have chosen their actions, the cumulative action is $X_T=X$.

Suppose that each player chooses an action $x_i$ from a set $\gX_i$ and if the profile of actions is $\bx = (x_1,\dots,x_n)$, then player $i$ gets a payoff
\begin{equation}
	U_i(\bx)
	= u_i(x_i,X).
\end{equation}
Take a player $i$ from the last period $T$. Player $i$ observes cumulative effort $X_{T-1}$ before period $T$ and knows that other players in period $T$ are choosing efforts simultaneously to him. Therefore he maximizes
\[
\max_{x_i \in \gX_i} u_i\left(x_i,x_i + X_{T-1} + \sum_{j \in \prt_T \setminus \{i\}} x_j \right).
\]
The standard best-response function would be $x_i^*(X_{T-1})$.\footnote{More formally, this function is called reduced best-response function as it only depends on the sum.} But suppose we can express the optimal effort $x_i$ choice as a function of total effort, $\phi_i(X)$. Then adding up individual efforts in period $T$ consistent with total effort $X$ gives us a necessary condition for equilibrium,
\[
X_{T-1}
=
X - \sum_{i\in \prt_T} \phi_i(X).
\]
I denote the function on the right-hand-side by $f_{T-1}(X)$. Its inverse function (assuming it exists), $f_{T-1}^{-1}(X_{T-1})$ is the total effort induced by cumulative effort $X_{T-1}$, if all players in period $T$ behave optimally.

Suppose by induction that the same argument holds starting from period $t$, i.e., if cumulative effort after $t$ is $X_t$ then the total effort induced is $f_t^{-1}(X_t)$. Then player $i$ in period $t$ maximizes
\[
\max_{x_i \in \gX_i} u_i\left( x_i,f_t^{-1}(X_t) \right).
\]
If again, we can express the optimal $x_i$ only as a function $\phi_i(X)$, then adding up the conditions would give us a necessary condition for equilibrium
\[
X_{t-1}
= X_t - \sum_{i \in \prt_t} x_i
= f_t(X) - \sum_{i \in \prt_t} \phi_i(X),
\]
which I denote by $f_{t-1}(X)$. Finally, in the beginning of the game cumulative total action is $X_0=0$, which gives us an equilibrium condition for the whole game.

There are some gaps in this analysis that need to be filled.
In the paper I showed with the Tullock contest payoffs, $u_i(x_i,X)=\frac{x_i}{X}-x_i$ and $\gX_i = \R_+$, all the necessary assumptions are satisfied, so that this analysis characterizes the unique equilibrium.
In the next subsections, I show how the analysis can be applied in several extensions of the model in the paper.
I also give some examples where the analysis fails.

\subsection{Aggregative Games} \label{A:aggregative}

In a special case when all players make their choices simultaneously, i.e., when $T=1$, the game defined here is a \emph{(linearly) aggregative game}, introduced by \cite{selten_preispolitik_1970}.\footnote{For a detailed review, see \cite{jensen_aggregative_2018}.} In this case, my construction simply requires that
\begin{equation} \label{E:aggregate_equilibrium}
	f_0(X)
	= X - \sum_{i=1}^n \phi_i(X)
	= 0
	\iff
	X = \sum_{i=1}^n \phi_i(X).
\end{equation}
This is a known condition, in this case $\phi_i(X)$ is called backward response correspondence of player $i$ and $\sum_{i=1}^n \phi_i(X)$ the aggregate backward correspondence. If the utility functions $u_i$ are quasi-concave and upper semi-continuous in $x_i$, then the game has an equilibrium \citep{jensen_aggregative_2018}.

However, if $T>1$, then the game is not aggregative anymore because player $i$'s payoff is affected differently by different players, depending on the period they are moving. In this sense, the analysis here is a dynamic generalization of linearly aggregative games.

\subsection{Linearly Multiplicative Payoffs} \label{A:xhX}

The results generalize directly to a class of payoff functions, where the utility is linearly multiplicative in players' own action,
\begin{equation}
	u_i(x_i,X) = x_i h(X) \label{E:xhX}, \;\;\; x_i \in \gX_i = \R_+.
\end{equation}
In the case of Tullock contest payoffs, $h(X)=\frac{v}{X}-c$, where $v$ is the value of the prize and $c$ is the marginal cost of effort. Another important application for this class of games is an oligopoly with linear cost, where $h(X) = P(X)-c$, where $x_i$ is the firm's own quantity, $X$ the total quantity, $P(X)$ the inverse demand function, and $c$ the marginal cost.
Finally, it also includes public goods games, where $x_i$ is the private consumption and $h(X)$ is the marginal benefit of private consumption, which is decreasing in the public good contributions and therefore decreasing in total private consumption.

It is natural to assume in these applications that $h(X)$ is strictly decreasing up to some upper bound $\oX$, at which it takes value $h(\oX)=0$ and above which $h(X) \leq 0$. Therefore, effectively the action space is $\gX_i = [0,\oX]$. Without loss of generality, we can change the scale of actions so that $\oX = 1$.

The first-order optimality condition for players in period $T$ is then
\[
h(X) + x_i h'(X) = 0
\iff
x_i = g_1(X),
\]
where $g_1(X)=-\frac{h(X)}{h'(X)}$. Therefore we can write the inverted best-response function as
\[
f_{T-1}(X) = X-n_T g_1(X).
\]
Similarly, if the inverted best-response functions at period $t$ is $f_t(X)$, which is invertible in the relevant range, the payoff function of player $i$ in period $t$ is $u_i(x_i,f_t^{-1}(X_t))=x_i h(f_t^{-1}(X_t))$ and therefore the first-order condition for players in period $t$ is
\begin{equation} \label{E:FOClinmult}
	h(X) + x_i h'(X) \frac{1}{f_t'(X)} = 0
	\iff
	x_i = g_1(X) f_t'(X).
\end{equation}
Therefore $f_{t-1}(X)=f_t(X)- n_t f_t'(X) g_1(X)$. This shows that we can use the characterization derived in the paper, with two modifications. First, instead of specific expression $X(1-X)$, we have a function $g_1(X) = -\frac{h(X)}{h'(X)}$. And second, we need to impose some conditions on the function $h(X)$ so that the conditions for the existence and uniqueness are satisfied.

In particular, if \cref{C:cond1} holds, i.e., $f_t$ functions are well-behaved, then the characterization theorem (\cref{T:characterization}) holds without any modifications. Therefore, the equilibrium is still unique and can be computed as the highest root of $f_0(X)$ in $[0,1]$. Moreover, the limit for large contests (\cref{P:largecontests}) holds as well, with a particular adjustment in formulas. Let $\alpha = -g_1'(1) > 0$. Then the formulas in \cref{E:largecontests} would be adjusted as
\begin{equation}
	\lim_{n \to \infty} \left[
	X^n - \left( 1-\frac{1}{\prod_{t=1}^T (1+\alpha n_t^n)} \right)
	\right]
	= 0
	\;\;
	\text{and}
	\;\;
	\lim_{n \to \infty} \left[ x_i^n - \frac{\alpha}{\prod_{s=1}^t (1+ \alpha n_s^n)} \right] = 0.
\end{equation}

The adjustment of $g_k$ functions is also straightforward, defined as $g_1(X) = -\frac{h(X)}{h'(X)}$ and $g_{k+1}(X) = -g_k'(X) g_1(X)$ for all $k$. If in addition to \cref{C:cond1}, also \cref{C:cond2} holds, i.e., efforts are higher-order strategic substitutes, then essentially all the remaining results in the paper generalize directly.
In particular,  the information theorem (\cref{T:information}), all its corollaries (\cref{C:information}), and the earlier mover advantage result (\cref{P:earliermover}) hold as stated.

The only result that does not generalize is \cref{L:gktullock} that showed that with Tullock payoffs, the weights $g_k(X^*)$ are decreasing in $k$. It is easy to see that this result depends on the function $h(X)$. 
For example, consider the case when $h(X) = \sqrt[\alpha]{1-X}$ for all $X \in [0,1]$ and $0$ otherwise, where $\alpha > 0$ is a constant. 
Then $g_1(X) = \alpha(1-X)$, $g_2(X)=\alpha^2 (1-X)$, and so on, $g_k(X) = \alpha^k (1-X)$. Whenever $\alpha > 1$, the weights are increasing in this case.

\subsubsection{Example: Completely Monotone Functions}

The remaining question is when the sufficient \cref{C:cond1,C:cond2} are satisfied? 
For example, one special class of functions where these assumption are satisfied, is the class of functions, where $g_1(X) = -\frac{h(X)}{h'(X)}$ is \emph{completely monotone}, i.e., $(-1)^{k} \dd[k]{g_1(X)}{X} \geq 0$ for all $k \in \N$.\footnote{
	It suffices that $g_1(X)$ is only $T$-times monotone, which is less restrictive, but perhaps harder to verify.
}
This includes many functions, including linear $h(X)$, power function $h(X) = \sqrt[\alpha]{1-X}$, but also many other natural functions. For example, the following functions are all completely monotone: 
$g(X)=\alpha (1-X^m)$, $g(X) =\alpha (1-X)^m$, for all $m \in \N$, $g(X) = \alpha\left( (X+\gamma)^s-(1+\gamma)^s \right)$ for all $s<0, \gamma>0$, $g(X) = \alpha \left[ e^{-rX}-e^{-r} \right]$ for all $r>0$, and $g(X) = -\alpha \log(X)$, all with any $\alpha > 0$. Also, all sums and products of completely monotone functions are completely monotone.

\subsubsection{Example: An Oligopoly with Logarithmic Demand} \label{A:logdemand}

Let me also provide an example where the analysis can be directly extended, even if $g_1(X)$ function is not monotone.
Let us take an oligopoly with $n_1$ leaders and $n_2$ followers, with inverse demand function $P(X)=1-\log X$, and marginal cost $c=1$.\footnote{When the demand function would be linear, this would be exactly the model in \cite{daughety_beneficial_1990}.} Then $h(X) = -\log X$ and $g(X) = -\frac{h(X)}{h'(X)} = -X \log X$, which is not monotone. Therefore, we have to verify the \cref{C:cond1} directly. Applying the recursive rule gives,
\begin{align*}
	f_2(X) &= X, \\
	f_1(X) &= X (1+n_2 \log X), \\
	f_0(X) &= X (1+(n_1+n_2+n_1 n_2) \log X + n_1 n_2 (\log X)^2).
\end{align*}
\Cref{F:ex22log} depicts the functions.
The highest root of $f_1(X)$ is the solution to $1+n_2 \log X=0$, which is
$\uX_1 = e^{-\frac{1}{n_2}} > \uX_2=0$, and $f_1(X)$ is negative in $[0,\uX_1]$ and strictly increasing in $[\uX_1,1]$.
\begin{figure}[!ht]
	\centering
	\includegraphics[trim={0 22pt 0 15pt},clip,width=0.6\linewidth]{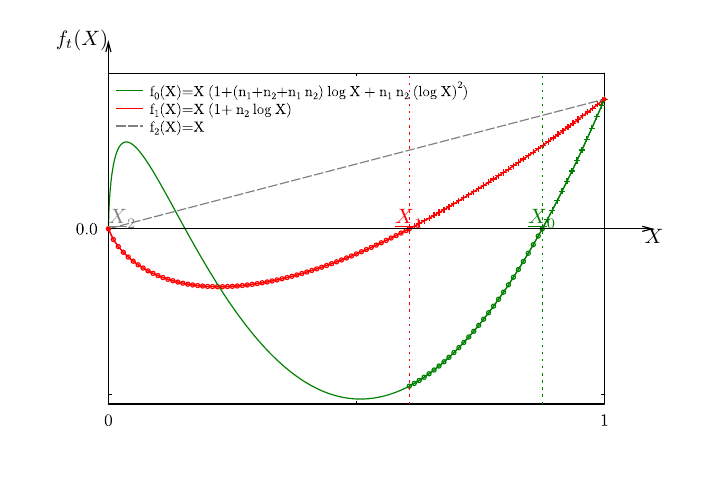}
	\caption{Illustration of \cref{C:cond1} in the case $\bn = (2,2)$ and $g(X)=-X \log X$. Line segments marked with circles are negative and line segments marked with pluses are strictly increasing.
	}
	\label{F:ex22log}
\end{figure}

Now, the expression in the parenthesis of $f_0(X)$ is a quadratic function of $\log X$ that has two roots, both in $(0,1)$. The function $f_0$ is strictly increasing above its highest root, so to verify \cref{C:cond1} it suffices to verify that at $f_0(\uX_1) <0$. Since $\log \uX_1 = -\frac{1}{n_2}$, this is equivalent to
\[
f_0(\uX_1)
= 
\uX_1
\frac{1}{n_2}
\left(
n_2-n_1-n_2-n_1 n_2 + n_1
\right)
=
-n_1
\uX_1<0
\]
The highest root of $f_0(X)$ is the solution to 
\[
1+(n_1+n_2+n_1 n_2) \log X + n_1 n_2 (\log X)^2 = 0,
\]
which is a quadratic equation of $\log X$ and gives 
\[
\log \uX_0
= \frac{\sqrt{(n+p)^2 - 4 p}-(n+p)
}{2 p} 
\;\;\iff\;\;
\uX_0
= e^{\frac{\sqrt{(n+p))^2 - 4 p}-(n+p)
	}{2 p}},
\]
where $n=n_1+n_2$ and $p=n_1 n_2$. Then $\uX_0>\uX_1$, $f_0(X)<0$ between and strictly increasing above $\uX_0$. The equilibrium effort is $X^* = \uX_0$

For example, if $n_1=n_2=2$, then $X^* = e^{\frac{\sqrt{3}}{2}-1} \approx 0.8746$. 
On the other hand, if all $4$ players were to make simultaneous decisions, then the equilibrium would be $X^* = e^{-\frac{1}{4}} \approx 0.7788$. 
If $\bn=(1,3)$ or $\bn=(3,1)$, then $X^* = e^{\frac{\sqrt{37}-7}{6}} \approx 0.8582$.

\subsubsection{Example: Payoffs That Violate \Cref{C:cond1,C:cond2}} \label{A:cond_violated}

In this example, I construct a simple modification of Tullock payoffs, which may not satisfy \cref{C:cond1,C:cond2}.
Suppose that the payoff function is $h(X) = \sqrt[\alpha]{\frac{1}{X} -1}$ for $\alpha>0$. Then
$g_1(X) =\alpha X (1-X)$ and $g_2(X) = \alpha^2 X (2X-1) (1-X)$.

Suppose that we have the simultaneous two-player contest $\bn=(2)$. Then the equilibrium in pure strategies exists only if $\alpha > \frac{1}{2}$. It is useful to compare this example with the Tullock payoffs, where $\alpha=1$. In this case, actions are strategic substitutes if their sum $X$ is above $\frac{1}{2}$ and strategic complements at low levels. The equilibrium with two players was the knife-edge case (which I assumed away by assuming $n>2$), where the actions are neither substitutes nor complements. 
When $\alpha<1$, the region where efforts are strategic complements is expanded so that the equilibrium is in the region of strategic complements, which is an incentive to reduce actions in equilibrium. If the strategic complementary is strong enough, the pure-strategy equilibrium vanishes.
When $\alpha>\frac{1}{2}$, the total equilibrium effort is $X^*=\frac{2\alpha-1}{2\alpha}$ and individual efforts are $x_i^* = \frac{X^*}{2}$.

Next, consider the sequential two-player contest $\hbn = (1,1)$. Then the equilibrium existence requires even higher $\alpha$, as we have an additional ``encouragement effect'' with complements: player 1 whose action is observable wishes to reduce the action of player 2 by choosing lower effort. In particular, the equilibrium exists if and only if $\alpha 
> 2 (\sqrt{2} - 1) \approx 0.82843$. The total equilibrium effort is
\[
X^* 
= \frac{3 \alpha - 2 + \sqrt{\alpha^2 + 4 \alpha - 4}}{4 \alpha}.
\]
Individual effort of player $2$ is $x_2^* = g_1(X^*)$ and therefore for player $1$ it is $x_1^* = X^* - g_1(X^*)$.

For concreteness, let us take $\alpha = 0.9$. 
Then in simultaneous contest, $X^*\approx 0.4444$ and $x_1^*=x_2^*\approx 0.2222$.
In sequential contest, $X^* \approx 0.3723$ and individual efforts $x_1^* \approx 0.1620$ and $x_2^* \approx 0.2103$.
As we see, in sequential, both players reduce their efforts compared to the simultaneous version, and there is a last-mover advantage.

These calculations illustrate how results change when \cref{C:cond1,C:cond2} are not satisfied. For a small $\alpha$, the inverted best-response functions are not well-behaved (\cref{C:cond1} not satisfied), and therefore the equilibrium may even not exits. For $\alpha$ slightly below $1$, efforts in two-player contest are not strategic substitutes, i.e., \cref{C:cond2} is not satisfied. In this case, the Information Theorem and the Earlier-Mover advantage fail to hold.

These calculations also illustrate the importance of implicit assumptions in earlier results, like \cite{daughety_beneficial_1990} beneficial concentration result. \cite{daughety_beneficial_1990} 
showed that in an oligopoly model with identical firms and linear demand, dividing the oligopolists between two periods, instead of having them move in one period, both increases the industry concentration in terms of Herfindahl-Hirschman Index (HHI) and also increases the consumer surplus as it increases total quantity. My results show that \cite{daughety_beneficial_1990} result holds much more generally than was previously known. But, the analysis also shows its limitations. Assuming linear demand function implies that quantities are necessarily higher-order strategic substitutes. If they are not, such as in the example here, the conclusion may be reversed. Indeed, in the numeric example with two players and $\alpha = 0.9$, we got that simultaneous version of the game is less concentrated, as both players get equal market share,  and ensures higher consumer surplus, as the total quantity is larger.

\subsubsection{Example: Direct Substitutes, but Indirect Complements} \label{A:examples}

This final example shows that efforts can be direct substitutes in the standard sense, but complements due to the indirect effects. Suppose that $u_i(\bx)= x_i h(X)$, where for some $a>0$ and all $X < \sqrt{a}$, $h(X)$ is
\begin{equation}
	h(X)
	= e^{ -\sqrt{a} \tan^{-1} \left( \frac{X}{\sqrt{a-X^2}} \right) },
\end{equation}
and for $X \geq \sqrt{a}$, $h(X)=0$.\footnote{The function is not continuous at $\sqrt{a}$, so technically it violates the assumptions in the paper, but this does not affect the conclusions.} The advantage of this function is that we can compute $g_k$ functions for $k \leq 3$ as follows:
\begin{align*}
	g_1(X) &= -\frac{h(X)}{h'(X)} = \sqrt{1-\frac{X^2}{a}}, \\
	g_2(X) &= -g_1'(X) g_1(X)
	= - \frac{-\frac{2X}{a} \frac{1}{2}}{\sqrt{1-\frac{X^2}{a}}} \sqrt{1-\frac{X^2}{a}}
	= \frac{X}{a}, \\
	g_3(X) &= -g_2'(X) g_1(X)
	= -\frac{1}{a} \sqrt{1-\frac{X^2}{a}}.
\end{align*}
Therefore $g_1(X)>0$ and $g_2(X)>0$ for all $X < \sqrt{a}$, so the actions are strategic substitutes in the classical sense. However, $g_3(X) < 0$ for all $X < \sqrt{a}$, which means that they are not higher-order strategic substitutes.
When we are looking for simultaneous game, we get equilibrium $X^* = \frac{\sqrt{a}n}{\sqrt{a+n^2}}$, which is a well-behaving relationship with $X^*$ strictly increasing in $n$ and $\lim_{n \to \infty} X^* = \sqrt{a}$.

In two-period model $\bn=(n_1,n_2)$, pure-strategy equilibrium with all players active may not exist. The issue is that the strategic substitutability effect, quantified by $g_2(X) = \frac{X}{a}$, may become large as $X$ increases. Therefore the benefit from the discouragement effect for the first-movers may be so large that they end up at a corner solution. However, when $a$ is large enough, then we still get interior solutions, and the model behaves as the intuition from the direct discouragement effect would predict. Each disclosure increases total effort, and homogeneity increases total effort. Total effort in contest $(1,3)$ is the same as in $(3,1)$ and both are higher than in less informative contest $(4)$, and both are lower than in more homogeneous contest $(2,2)$.

When we add the third period, the results may change because the indirect discouragement effect is negative, captured by term $g_3(X) <0$. To see this effect in practice, let us compare contest $\bn = (3,3)$ and $\hbn=(4,1,1)$. In both contests, there are six players. There are also the same number, nine, direct observations of other players' efforts. The difference is that in contest $\hbn$, there are also four indirect observations of efforts. Under the higher-order strategic substitutes assumption, this would lead to higher total effort in $\hbn$, and it is easy to check that this is indeed the case with Tullock payoffs. But under the functional form assumptions here, the order is reversed.
For example, if $a=25$, then total equilibrium effort in $\bn$ is $X^* = \frac{75}{17} \approx 4.4118$, whereas total equilibrium effort in $\hbn$ is $\hX^* = \frac{365}{13 \sqrt{41}} \approx 4.3849$.

\subsection{Tullock Contests with Quadratic Costs} \label{A:quadratic}

The linearly multiplicative payoffs exclude the possibility of nonlinear cost of effort and it is natural to ask whether and to what extent the analysis applies in the case of nonlinear costs. Suppose that the payoff function is
\begin{equation}
	u_i(x_i,X)
	= \frac{x_i}{X} - c(x_i), \;\;\; x_i \in \gX_i = \R_+.
\end{equation}
\Cref{E:FOClinmult} in the previous subsection illustrates that the reason why we can easily express individual equilibrium action $x_i$ as a function of the total action $X$ is that in the linearly multiplicative case, the first-order condition is linear in individual action. This property fails for general cost function $c(x_i)$.
However, in a special case with quadratic costs, $c(x_i) = x_i + \frac{\beta}{2} x_i^2$, the first-order condition remains linear and therefore the analysis is relatively tractable. This allows studying convex ($\beta>0$) or concave ($\beta<0$) costs function.\footnote{Note that we can normalize the coefficient in front of $x_i$ to zero without loss of generality for the same reason as in the main text---we can always scale all units by the same factor.}

In this specification, the first-order optimality condition for players in the last period is
\[
\frac{1}{X} - \frac{x_i}{X^2} - 1 - \beta x_i = 0
\iff
x_i = \frac{X(1-X)}{1+\beta X^2}.
\]
Adding up over $i \in \prt_T$ gives us the inverted best-response function
\[
f_{T-1}(X) = X - n_T \frac{X(1-X)}{1+\beta X^2}.
\]

\paragraph{Example: simultaneous contest.} A necessary condition for interior equilibrium in pure strategies is
\[
f_0(X^*) = X^* - n \frac{X^* (1-X^*)}{1+\beta (X^*)^2} = 0
\iff
\beta (X^*)^2 + n X^* + 1 - n = 0.
\]
If $\beta = 0$, then $X^* = \frac{n-1}{n}$. Assuming $\beta \neq 0$, we get two roots, i.e., two candidates for equilibria. Only the higher of the two can be non-negative real number, which is
\[
X^* =
\frac{
	-n
	+ \sqrt{n^2 + 4 \beta (n-1)}
}{2 \beta}.
\]
It is straightforward to check that the solution exists for any $n \geq 2$ if $\beta > -1$. Of course, $\lim_{\beta \to 0} X^* = \frac{n-1}{n}$.

We can continue with similar arguments when $T>1$. The first-order condition for players in period $t<T$ is
\[
\frac{1}{X} - \frac{x_i}{X^2} \frac{1}{f_t'(X)} - 1 - \beta x_i = 0
\iff
x_i  = \frac{X(1-X) f_t'(X)}{1+ \beta X^2 f_t'(X)}.
\]
Combining these conditions for $i \in \prt_t$ gives us
\[
X_{t-1} = X_t - n_t \frac{X(1-X) f_t'(X)}{1+ \beta X^2 f_t'(X)}
\Rightarrow
f_{t-1}(X) = f_t(X) - n_t \frac{X(1-X) f_t'(X)}{1+ \beta X^2 f_t'(X)}.
\]
Therefore we can now define the inverted best-response functions recursively with this rule and get a necessary condition for equilibrium as follows
\begin{align*}
	f_T(X) &= X,  \\
	f_{t-1}(X) &= f_t(X) - n_t \frac{X(1-X) f_t'(X)}{1+ \beta X^2 f_t'(X)}, \\
	f_0(X) &= 0.
\end{align*}
These expressions are more complicated and do not simplify to the characterization with measures of information, described in \cref{S:information}, but it is easy to see that for $\beta$ close enough to zero, the characterization result still applies as do most of the other results in the paper. The following proposition summarizes these claims.

\begin{proposition}
	For each $\bn$, there exists $\underline{\beta}< 0 < \overline{\beta}$ such that the contest $\bn$ with quadratic cost function with any parameter $\beta \in (\underline{\beta},\overline{\beta})$ has a unique equilibrium, characterized in the same way as in \cref{T:characterization}.
	Moreover, \cref{P:earliermover} (earlier-mover advantage) and all conclusions in \cref{C:information} apply, except (2) (independence of permutations).
\end{proposition}

The proof is a straightforward continuity argument. In each step of the proof of \cref{T:characterization}, \cref{C:information}, and \cref{P:earliermover}, the inequalities were strict. As the $f_t$ functions are continuous in $\beta$, the same inequalities continue to hold at least in some interval around $\beta=0$. The only exception is the independence of permutations in \cref{C:information}, where the result relied on the fact that $f_0$ is exactly equal when swapping two groups, and as the following example illustrates, this result does not hold anymore.

\paragraph{Example: two-period contest.} When $T=2$, we get
\begin{align*}
	f_1(X) &= X - n_2 \frac{X(1-X)}{1+ \beta X^2}, \\
	f_0(X) &
	= f_1(X) - n_1 \frac{X(1-X) f_1'(X)}{1+ \beta X^2 f_1'(X)}
	= 0.
\end{align*}
The highest root $X^*$ of the last equation is still the equilibrium effort in the contest and the value is straightforward to compute numerically. \Cref{F:fpermqcosts} depicts the function $f_0(X)$ with seven-player contests $(1,6)$ and $(6,1)$ under linear, convex, and concave costs. Remember that the highest root of $f_0(X)=0$ is the total equilibrium effort. We can make a few observations from this figure. First, the figure confirms the conclusion from above: as long as $\beta$ is close enough to zero, the qualitative properties of $f_0$ are unchanged. Second, if $\beta>0$, the equilibrium effort is typically lower than in the linear case, which is natural as the costs are now higher. Similarly, if $\beta<0$, the equilibrium effort is higher than in the linear case.

Third, independence of permutations fails whenever $\beta \neq 0$ (the dashed lines do not coincide and the dotted lines do not coincide). This is intuitive. In the linear cost case, the total effort was the same first-mover $(1,6)$ and last-mover contest $(6,1)$, but the effort distribution was not. With a single leader, the leader exerts much higher effort than the followers. In the single follower case, all leaders exert the same, a quite low effort. If costs are quadratic, these two have different costs and therefore would not be equivalent anymore.
\begin{figure}
	\centering
	\begin{subfigure}[b]{0.48\textwidth}
		\centering
		\includegraphics[width=\textwidth]{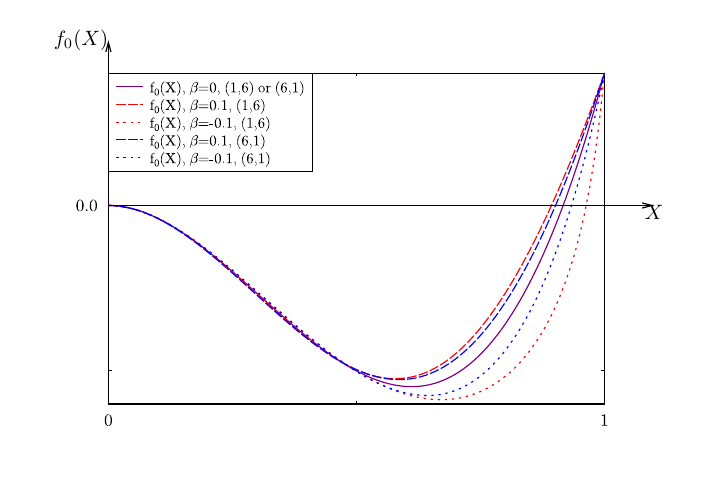}
		\caption{Function $f_0$ for contests $(1,6)$ and $(6,1)$ with linear ($\beta=0$), convex ($\beta=0.1$), and concave ($\beta=-0.1$) costs.}
		\label{F:fpermqcosts}
	\end{subfigure}
	\hfill
	\begin{subfigure}[b]{0.48\textwidth}
		\centering
		\includegraphics[width=\textwidth]{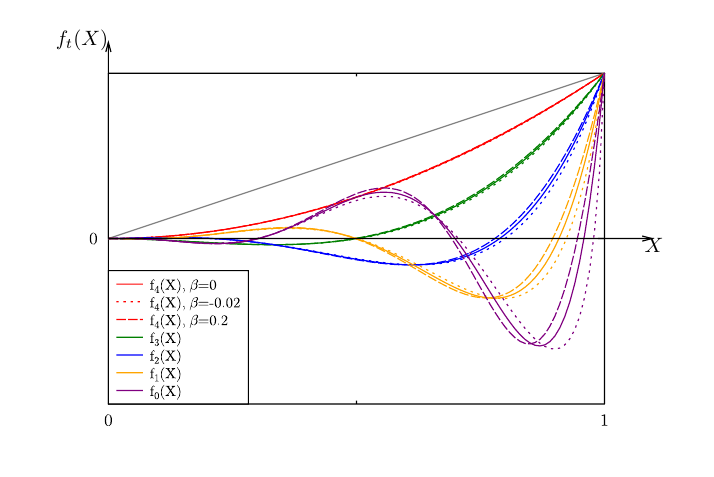}
		\caption{$f_t$ functions for five-player sequential contest $\bn=(1,1,1,1,1)$ with linear ($\beta=0$), convex ($\beta=0.2$), and concave ($\beta=-0.02$) costs.}
		\label{F:fqcosts}
	\end{subfigure}
	\caption{Equilibrium characterization with quadratic costs. The highest root of $f_0(X)=0$ is still the total equilibrium effort.}
	\label{F:fqcost_plots}
\end{figure}

Finally, \cref{F:fqcosts} illustrates the case of five-player sequential contest $\bn = (1,1,1,1,1)$. The solid lines are $f_t$ are the $f_t$ functions in the case of linear cost function discussed in the main text ($\beta=0$), the dashed lines correspond to convex cost ($\beta=0.2>0$), and the dotted lines correspond to concave costs ($\beta=-0.02$). The figure illustrates that when $\beta$ is small enough, the ordering of the roots of $f_t$ functions remains unchanged, so that the arguments in the proof of \cref{T:characterization} still hold.

\subsection{Tullock Contests with Heterogeneous Players} \label{A:asymmetric}

We can also extend the analysis to heterogeneous players. For example, players can differ by their valuations for the prize, $v_i$, and costs of efforts $c_i$, so that the payoff functions are the following.\footnote{More generally, in the form $u_i(x_i,X)=x_i h_i(X)$ for some well-behaving functions $h_i$.}
\[
u_i(x_i,X) = \frac{x_i}{X} v_i - c_i x_i.
\]
In this case, the first-order condition of a player in period $T$ is
\[
\dd{u_i(x_i,X)}{x_i}
= \frac{1}{X} v_i - \frac{x_i}{X^2} v_i - c_i
\leq 0,
\]
with equality whenever the optimal $x_i > 0$. We can express $x_i$ as a function of $X$ as
\[
x_i = \max\left\{0, X \left( 1-\frac{c_i}{v_i} X \right) \right\}.
\]
Adding up these expressions gives us $f_{T-1}(X) = X - G_T(X)$, where
\[
G_T(X) = \sum_{i \in \prt_T} \max\left\{0, X \left( 1-\frac{c_i}{v_i} X \right) \right\}.
\]
We see that overall the approach extends with heterogeneous players as well, but there is an added complication. As long as $\frac{c_i}{v_i}$ is different for different players, they choose to be inactive at different levels of total effort $X$. Therefore solving for the equilibrium behavior requires considering the corner solutions.

This issue is even more complicated when considering the behavior of players in earlier periods because the inverted best-response function $f_t$ is not typically continuously differentiable. At points, where some followers choose to become inactive, its slope changes discretely. Therefore players may choose to adjust their efforts just enough so that some followers will stay inactive.

Consider period $t$ and suppose that $X_t$ implies total effort $f_t^{-1}(X_t)$. If player $i$'s optimality condition is such that it implies continuously differentiable $f_t$, it must satisfy the first-order optimality condition
\[
\frac{1}{X} v_i - \frac{x_i}{X^2} v_i \frac{1}{f_t'(X)}  - c_i = 0
\;\;\;
\iff
\;\;\;
x_i = X \left( 1 - \frac{c_i}{v_i} X \right) f_t'(X).
\]
Again, adding up these expressions gives us $f_{t-1}(X)=f_t(X) - G_t(X) f_t'(X)$, where
\[
G_t(X) = \max\left\{
0, X \left( 1 - \frac{c_i}{v_i} X \right)
\right\}.
\]
This suggests a method of finding equilibria in contests with heterogeneous players: either the total equilibrium effort is such that $X^* = \frac{c_i}{v_i}$ for some $i$, so that one player is exactly remaining inactive, or it is $X^*$ such that some players stay active (at least two) and some stay inactive (possibly none), and the equilibrium is determined as a root of $f_0(X^*)=0$, where $f_0,\dots,f_T$ are defined recursively as shown above.

\paragraph{Example: two players.} Suppose that $n=2$. Without loss of generality, let us call one player be strong (type $s$) and normalize $\frac{c_s}{v_s}=1$. The other player is weak (type $w$) with $\frac{c_w}{v_w}=c \geq 1$. In the symmetric case when $c=1$, both simultaneous and sequential game lead to the same equilibrium, where $x_w^*=x_s^*=\frac{1}{4}$ and therefore $X^*=\frac{1}{2}$. However, when $c>1$, we have three possible orderings and straightforward calculations lead to the following conclusions.

In simultaneous contest, both players choose to be active and the total equilibrium effort is $X_{sim}^* = \frac{1}{1+c}$. In sequential contest where weak player moves first (order $w \to s$) both players still stay active, but the total equilibrium effort is $X_{w \to s}^* = \frac{1}{2c} < X_{sim^*}$. Finally, in sequential contest where the strong player moves first (order $s \to w$), the qualitative properties of equilibrium behavior depend on $c$. If $c<2$, i.e., heterogeneity is relatively mild, then both players stay active and the total equilibrium effort is $X^*_{s \to w} = \frac{1}{2} > X^*_{sim}$. But if $c \geq 2$, the strong player deters entry by the weak player by choosing $x_s^* = \frac{1}{c}$. Therefore in this case $x_w^*=0$ and $X^*_{s \to w} = x_s^* = \frac{1}{c} > X^*_{sim}$.\footnote{
	\cite{morgan_sequential_2003} and \cite{serena_sequential_2017} analyze the two-player asymmetric sequential contests (with endogenous order of moves). In this case, the equilibrium can be characterized with the standard backward induction. \cite{xu_three-player_2020} use the approach introduced here to study the three-player asymmetric sequential contests.}

\end{document}